\newcommand{\VS}{\textsf{VS}}
\newcommand{\FS}{\textsf{FS}}
\newcommand{\PS}{\textsf{PS}}
\newcommand{\TS}{\textsf{TS}}
\newcommand{\LS}{\textsf{LS}}
\newcommand{\ar}[1]{\ensuremath{\textsf{ar}(#1)}}
\newcommand{\Q}{\ensuremath{{\cal Q}}}
\newcommand{\A}
  {\ensuremath{{\cal A}}}
\newcommand{\AC}
  {\ensuremath{{\cal A}_{C}}}
\newcommand{\LabAClit}[1]
  {\ensuremath{\AC^{\#}(#1)}}
\newcommand{\ACpp}[1]
  {\ensuremath{\AC[#1]}}
\newcommand{\LabACpp}[1]
  {\ensuremath{\AC^{\#}[#1]}}
\newcommand{\kbd}[1]{\mbox{\tt #1}}
\newcommand{\skbd}[1]{\mbox{\tt\small{#1}}}
\newcommand{\gd}[0]{\mid}
\newcommand{\state}[2]
  {\ensuremath{\langle #1 \gd{} #2 \rangle}}
\newcommand{\exstate}[3]
  {\ensuremath{\langle #1 \gd{} #2 \gd{} #3 \rangle}}
\newcommand{\emptyGoal}{\ensuremath{\square}}
\newcommand{\ADeps}{\ensuremath{{\cal E}}}
\newcommand{\defn}[1]
  {\ensuremath{\textsf{defn}(#1)}}
\newcommand{\answers}[1]{\ensuremath{\textsf{answers}(#1)}}
\newcommand{\currstate}[1]{\ensuremath{{#1}_{[-1]}}}
\newcommand{\nthstate}[2]{\ensuremath{{#1}_{[#2]}}}
\newcommand{\simp}[1]
  {\ensuremath{\textsf{simp}(#1,\theta)}}
\newcommand{\ext}[2]
  {\ensuremath{\textsf{ext}(\A_C,#1,#2)}}
\newcommand{\rtsolve}[2]
  {\textsf{rtsolve}(#1,#2)}
\newcommand{\clause}[2]
  {\ensuremath{#1 \textsf{:-} #2}}
\newcommand{\reduction}[2]
  {\ensuremath{#1 \leadsto #2}}
\newcommand{\reductionStar}[2]
  {\ensuremath{#1 \leadsto^* #2}}
\newcommand{\reductionA}[2]
  {\ensuremath{#1 \leadsto_{\A} #2}}
\newcommand{\reductionStarA}[2]
  {\ensuremath{#1 \leadsto^*_{\A} #2}}
\newcommand{\reductionHAs}[2]
  {\ensuremath{#1 \leadsto_{H\A_{\textsf{\scriptsize{s}}}}#2}}
\newcommand{\reductionHAd}[2]
  {\ensuremath{#1 \leadsto_{H\A_{\textsf{\scriptsize{d}}}} #2}}
\newcommand{\derivations}[1]
  {\ensuremath{\textsf{derivs}(#1)}}
\newcommand{\derivationsA}[1]
  {\ensuremath{\textsf{derivs}_{\A}(#1)}}
\newcommand{\derivationsHAd}[1]
  {\ensuremath{\textsf{derivs}_{H\A_{\textsf{\scriptsize{d}}}}(#1)}}
\newcommand{\asrId}[1]{\ensuremath{c_{#1}}}
\newcommand{\negAsrId}[1]{\ensuremath{\bar{c}_{#1}}}
\newcommand{\hypId}[1]{\ensuremath{h_{#1}}}
\newcommand{\negHypId}[1]{\ensuremath{\bar{h}_{#1}}}
\newcommand{\checkedAsr}[1]{\ensuremath{\textsf{checked}(#1)}}
\newcommand{\falseAsr}[1]{\ensuremath{\textsf{false}(#1)}}
\newcommand{\checkLitLab}[1]{\ensuremath{\textsf{check}(#1)}}
\newcommand{\callsAsr}[2]{\ensuremath{\textsf{calls}(#1, #2)}}
\newcommand{\successAsr}[3]{\ensuremath{\textsf{success}(#1, #2, #3)}}
\newcommand{\labCallsAsr}[3]
  {\ensuremath{\textsf{\asrId{#1}\#calls}(#2, #3)}}
\newcommand{\labSuccessAsr}[4]
  {\ensuremath{\textsf{\asrId{#1}\#success}(#2, #3, #4)}}
\newcommand{\hypSuccessAsr}[4]
  {\ensuremath{\textsf{\hypId{#1}\#success}(#2, #3, #4)}}
\newcommand{\errorErase}[1]
  {\ensuremath{#1^{\circ}}}
\title{An Approach to Higher-Order Assertion-based \\ Debugging
       of Higher-Order (C)LP Programs
  \thanks{Research supported in part by projects EU
    FP7 318337 \emph{ENTRA}, Spanish MINECO TIN2012-39391
    \emph{StrongSoft} and TIN2008-05624 \emph{DOVES}, and Comunidad de
    Madrid TIC/1465 \emph{PROMETIDOS-CM}.}
}
\author{Nataliia Stulova,\inst{1} Jos\'{e} F. Morales,\inst{1}
 and Manuel V. Hermenegildo\inst{1,2}}
\institute{IMDEA Software Institute, Madrid, Spain
\and School of Computer Science, T. U. Madrid (UPM), Spain}
\begin{document}
\maketitle

%%%%%%%%%%%%%%%%%%%%%%%%%%%%%%%%%%%%%%%%%%%%%%%%%%%%%%%%%%%%%%%%%%%%%%

\begin{abstract}
  Higher-order constructs extend the expressiveness of first-order
  (Constraint) Logic Programming ((C)LP) both syntactically and
  semantically. At the same time assertions have been in use for some
  time in (C)LP systems helping programmers detect errors and
  validate programs. However, these assertion-based extensions to
  (C)LP have not been integrated well with higher-order to date. This
  paper contributes to filling this gap by extending the
  assertion-based approach to error detection and program validation
  to the higher-order context within (C)LP. We propose an extension
  of properties and assertions as used in (C)LP in order to be able
  to fully describe arguments that are predicates. The extension
  makes the full power of the assertion language available when
  describing higher-order arguments. We provide syntax and semantics
  for (higher-order) properties and assertions, as well as for
  programs which contain such assertions, including the notions of
  error and partial correctness. We also discuss several alternatives
  for performing run-time checking of such programs.
\end{abstract}

%%%%%%%%%%%%%%%%%%%%%%%%%%%%%%%%%%%%%%%%%%%%%%%%%%%%%%%%%%%%%%%%%%%%%%

\section{Introduction}

Higher-order programming adds flexibility to the software development
process. Within the (Constraint) Logic Programming ((C)LP) paradigm,
Prolog has included higher-order constructs since the early days, and
there have many other proposals for combining the first-order kernel
of (C)LP with different higher-order constructs (see,
e.g.,~\cite{warren-hiord,hiord-naish,ChenKiferWarren93,%
NadathurMiller98,ciao-hiord,daniel-phd}). Many of these proposals
are currently in use in different (C)LP systems and have been found
very useful in programming practice, inheriting the well-known
benefits of code reuse (templates), elegance, clarity, and
modularization.

A number of extensions have also been proposed for (C)LP in order
to enhance the process of error detection and program validation. In
addition to the use of classical strong
typing~\cite{goedel,mercury-jlp-short}, a number of other approaches
have been proposed which are based on the dynamic and/or static
checking of user-provided, optional \emph{assertions}~\cite{DNM88,%
assert-lang-ws,aadebug97-informal-short,%
BDM97,assrt-theoret-framework-lopstr99,%
DBLP:conf/discipl/Lai00,ciaopp-sas03-journal-scp-short,%
testchecks-iclp09}.
In practice, different aspects of the model
of~\cite{assrt-theoret-framework-lopstr99,%
ciaopp-sas03-journal-scp-short}
have been incorporated in a number of widely-used (C)LP systems, such
as Ciao, SWI, and XSB~\cite{hermenegildo11:ciao-design-tplp,%
xsb-journal-2012,DBLP:journals/tplp/MeraW13}.
A similar evolution is represented by the soft/gradual typing-based
approaches in functional programming and the contracts-based
extensions in object-oriented
programming~\cite{cartwright91:soft_typing-short,TypedSchemeF08-short,%
clousot,lamport99:types_spec_lang,DBLP:journals/fac/LeavensLM07}.

These two aspects, assertions and higher-order, are not independent.
When higher-order constructs are introduced in the language it becomes
necessary to describe properties of arguments of predicates that are
themselves also predicates.  While the combination of contracts and
higher-order has received some attention in functional
programming~\cite{DBLP:conf/icfp/FindlerF02,%
DBLP:journals/toplas/DimoulasF11},
within (C)LP the combination of higher-order with the previously
mentioned assertion-based approaches has received comparatively little
attention to date. Current Prolog systems simply use basic atomic
types (i.e., stating simply that the argument is a \texttt{pred},
\texttt{callable}, etc.) to describe predicate-bearing variables.
The approach of~\cite{BeierleKloosMeyer99} is oriented to meta
programming. It allows describing meta-types but there is no notion of
directionality (modes), and only a single pattern is allowed per
predicate.

This paper contributes to filling the existing gap between
higher-order and assertions in (C)LP.
Our starting point is the Ciao assertion
model~\cite{assrt-theoret-framework-lopstr99,%
  ciaopp-sas03-journal-scp-short}, since, as mentioned before, it has
been adopted at least in part in a number of the most popular (C)LP
systems.
After some preliminaries and notation (Section~\ref{sec:Prel-Notat})
we start by extending the traditional notion of programs and
derivations in order to deal with higher-order calls and recall and
adapt the notions of first-order conditional literals, assertions,
program correctness, and run-time checking to this type of
derivations (Section~\ref{sec:fo-ho}). This part allows us to revisit
the traditional model in this new, higher-order context, while
introducing a different formalization than the original one
of~\cite{assrt-theoret-framework-lopstr99}. This formalization, which
will be used throughout the paper, is more compact and gathers all
assertion violations as opposed to just the first one, among other
differences.
We then define an extension of the properties used in assertions and
of the assertions themselves to higher-order, and provide
corresponding semantics and results (Section~\ref{sec:ho-ho}).

%%%%%%%%%%%%%%%%%%%%%%%%%%%%%%%%%%%%%%%%%%%%%%%%%%%%%%%%%%%%%%%%%%%%%%

\section{Preliminaries and Notation}
\label{sec:Prel-Notat}

We recall some concepts and notation from standard (C)LP theory.
We denote by $\VS$, $\FS$, and $\PS$ the set of variable, function,
and predicate symbols, respectively. Variables start with a capital
letter. Each $p \in \PS$ and $f \in \FS$ is associated to a natural
number called its \emph{arity}, written $\ar{p}$ or $\ar{f}$.
The set of terms $\TS$ is inductivelly defined as follows: 
$\VS \subset \TS$, if $f \in \FS$ and $t_1, \ldots, t_n \in \TS$
then $f(t_1,\ldots,t_n) \in \TS$ where $\ar{f}=n$.
An \emph{atom} has the form $p(t_1,...,t_n)$ where 
$p \in PS$, $\ar{p} = n$, and $t_1,...,t_n \in TS$. 
A \emph{constraint} is essentially a conjunction of expressions built
from predefined predicates (such as term equations or inequalities
over the reals) whose arguments are constructed using predefined
functions (such as real addition).
A \emph{literal} is either an atom or a constraint. 
A \emph{goal} is a finite sequence of literals.
A \emph{rule} is of the form $H \mbox{\tt :-} B$ where $H$, the
\emph{head}, is an atom and $B$, the \emph{body}, is a possibly empty
finite sequence of literals.
A \emph{constraint logic program}, or \emph{program}, is a finite set
of rules.

We use $\sigma$ to represent a variable renaming and $\sigma(X)$ to
represent the result of applying the renaming $\sigma$ to some
syntactic object $X$ (a term, atom, literal, goal, etc.).
The \emph{definition} of an atom $A$ in a program, $\defn{A}$, is the
set of variable renamings of the program rules such that each renaming
has $A$ as a head and has distinct new local variables.
We assume that all rule heads are normalized, i.e., $H$ is of the form  
$p(X_1,...,X_n)$ where the $X_1,...,X_n$ are distinct free
variables. This is not restrictive since programs can always be
normalized, and it facilitates the presentation. However, for
conciseness in the examples
we sometimes use non-normalized programs.
Let $\overline{\exists}_L \theta$ be the constraint $\theta$
restricted to the variables of the syntactic object $L$. We denote
\emph{constraint entailment} by $\models$, so that $\theta_1\models
\theta_2$ denotes that $\theta_1$ entails $\theta_2$. In such case we
say that $\theta_2$ is \emph{weaker} than $\theta_1$.

For brevity, we will assume in the rest of the paper that we
are dealing with a single program, so that all sets of rules, etc.\
refer to that implicit program and it is not necessary to refer to it
explicitly in the notation.

%---------------------------------------------------------------------

\subsection{Operational Semantics}

The operational semantics of a program is given in terms of its
``derivations,'' which are sequences of reductions between ``states.'' 
A \emph{state} $\state{G}{\theta}$ consists of a goal $G$ and a
constraint store (or \emph{store} for short) $\theta$.
We use :: to denote concatenation of sequences and we assume for
simplicity that the underlying constraint solver is complete.
We use $\reduction{S}{S'}$ to indicate that a reduction can be applied
to state $S$ to obtain state $S'$. Also, $\reductionStar{S}{S'}$
indicates that there is a sequence of reduction steps from state $S$
to state $S'$.
We denote by $\nthstate{D}{i}$ the $i$-th state of the derivation. As
a shorthand, given a non-empty derivation $D$, $\currstate{D}$ denotes 
the last state.
A \emph{query} is a pair $(L,\theta)$, where $L$ is a literal and
$\theta$ a store, for which the (C)LP system starts a computation from
state $\state{L}{\theta}$.
The set of all derivations from the query $Q$ is denoted
$\derivations{Q}$.
The observational behavior of a program is given by its ``answers'' to
queries. A finite derivation from a query $(L,\theta)$ is
\emph{finished} if the last state in the derivation cannot be
reduced. 
Note that $\derivations{Q}$ contains not only finished derivations but
also all intermediate derivations from a query.
A finished derivation from a query $(L,\theta)$ is
\emph{successful} if the last state is of the form
$\state{\emptyGoal}{\theta'}$, where $\emptyGoal$ denotes the empty
goal sequence. In that case, the constraint $\bar{\exists}_{L}\theta'$
is an \emph{answer} to $S$.
We denote by $\answers{Q}$ the set of answers to a query $Q$.
A finished derivation is \emph{failed} if the last state is not of the
form $\state{\emptyGoal}{\theta}$.
A query $Q$ \emph{finitely fails} if $\derivations{Q}$ is
finite and contains no successful derivation.

%%%%%%%%%%%%%%%%%%%%%%%%%%%%%%%%%%%%%%%%%%%%%%%%%%%%%%%%%%%%%%%%%%%%%

\section{First-order Assertions on Higher-order Derivations}
\label{sec:fo-ho}

%---------------------------------------------------------------------

\subsection{Higher-order Programs and Derivations}

We start by extending the definition of program, state reduction,
and derivations in order to deal with the syntax and semantics of
higher-order calls.
 
\begin{definition}[Higher-order Programs]
  \emph{Higher-order programs} are a generalization of
  \emph{constraint logic programs} where:
  \begin{itemize}
  \item The set of literals $\LS$ is extended to include
    \emph{higher-order literals} $X(t_1,\ldots,t_n)$, where $X \in
    \VS$ and the $t_i \in \TS$.
  \item The set of terms $\TS$ is extended so that $\PS \subset \TS$
    (i.e., predicate symbols $p$ can be used as constants).
  \end{itemize}
\end{definition}
In the following we assume a simple semantics where when a call to a
higher-order literal $X(t_1,\ldots,t_n)$ occurs, $X$ has to be
constrained to a predicate symbol in the store:\footnote{This is also
  the most frequent semantics in current systems. Other alternatives,
  such as residuation~\cite{HASSAN93} (delays), predicate enumeration,
  etc.\ can also be used, requiring relatively straightforward
  adaptations of the model proposed.}

\begin{definition}[Reductions in Higher-order Programs]
  \label{def:ho-reductions}
  A state $S = \state{L::G}{\theta}$ where $L$ is a literal can be
  \emph{reduced} to a state $S'$, denoted $\reduction{S}{S'}$, as
  follows:
  \begin{enumerate}
  \item 
    If $L$ is a constraint and $\theta \land L$ is satisfiable,
    then $S' = \state{G}{\theta \land L}$.
  \item 
    If $L$ is an atom of the form $p(t_1,\ldots,t_n)$, for some
    rule $(L \mbox{\tt :-} B)$ $\in \defn{L}$, then $S' =
    \state{B::G}{\theta}$.
  \item If $L$ is of the form $X(t_1,\ldots,t_n)$, then 
    $S' = \state{G'}{\theta}$ where:
    \[
    G' = \left\{
      \begin{array}{lr}
        p(t_1,\ldots,t_n) :: G  
      & ~~~~\textrm{if}~ \exists p \in \PS \wedge \theta \models 
        (X = p) \wedge \ar{p} = n 
      \\
        \epsilon_{uninst\_call} 
      & \textrm{otherwise}
      \end{array}
      \right.
    \]
  \end{enumerate}
\end{definition}

The concepts of answers and of finished and successful derivations
carry over without change to this notion of higher-order derivations.
The notion of (finitely) failed derivation is extended as follows:
\begin{definition}[(Finitely) Failed Derivation]
  A finished derivation from a query $(L,\theta)$ is \emph{failed} iff
  its last state is not of the form $\state{\emptyGoal}{\theta'}$ or
  $\state{\epsilon_{uninst\_call}}{\theta}$.
\end{definition}
Finally, we introduce the concept of \emph{floundered} derivations:
\begin{definition}[Floundered Derivation]
  A finished derivation from a query $(L,\theta)$ is \emph{floundered}
  iff its last state is of the form 
  $\state{\epsilon_{uninst\_call}}{\theta}$.
\end{definition}
%

%---------------------------------------------------------------------

\subsection{First-order \emph{Pred} Assertions}
\label{sec:assertions}

Assertions are linguistic constructions for expressing properties of
programs. They are used for detecting deviations of the program
behavior (symptoms) with respect to such assertions, or to ensure
that no such deviations exist (correctness).
Herein,
we will use the \emph{pred} assertions
of~\cite{assert-lang-disciplbook-short},
given that they are the most frequently used assertions in practice,
and they subsume the other assertion schemas in that language.
Thus, in the following we will
use simply the term assertion to refer to a pred assertion.
Assertions allow specifying certain conditions on the
constraint store that must hold at certain points of program
derivations. In particular, they allow stating sets of
\emph{preconditions} and \emph{conditional postconditions} for a
given predicate.
A set of assertions for a predicate is of the form:
\begin{small}
  \[
  \begin{array}{l}
    \kbd{:- pred } Head \kbd{ : } Pre_1 \kbd{ => } Post_1 \kbd{.}
  \\
    \ldots
  \\
    \kbd{:- pred } Head \kbd{ : } Pre_n \kbd{ => } Post_n \kbd{.}
  \end{array}
  \]
\end{small}
where $Head$ is a normalized atom that denotes the predicate that the
assertions apply to, and the $Pre_i$ and $Post_i$ refer to the
variables of $Head$. We assume that variables in assertions are renamed
such that the $Head$ atom is identical for all assertions for a given
predicate.  A set of assertions as above states that in any execution
state $\state{Head :: G}{\theta}$
at least one of the $Pre_i$ conditions should hold, and that, given
the $(Pre_i,Post_i)$ pair(s) where $Pre_i$ holds, then, if $Head$
succeeds, the corresponding $Post_i$ should hold upon success.
The following example illustrates the basic concepts involved:

\begin{example}
  \label{ex:qsortasrt}
  The procedure \skbd{qsort(A,B)} is the usual one 
  that relates lists \skbd{A} and their sorted versions \skbd{B}. 
  The following assertions:
  \begin{small}
    \[
    \begin{array}{l}
      \kbd{:- pred qsort(A,B) : list(A) => (sorted(B), list(B)).}
    \\
      \kbd{:- pred qsort(A,B) : list(B) => (permutation(B,A), 
      list(A)).}
    \end{array}
    \]
  \end{small}
state that (restrict the meaning of \kbd{qsort} to):
  \vspace{-0.8em}
  \begin{itemize}
    \item \skbd{qsort(A,B)} should be called either with \skbd{A}
      constrained to a list or with \kbd{B} constrained to a list;
    \item if \skbd{qsort(A,B)} succeeds when called with \skbd{A}
      constrained to a list then on success \skbd{B} should be a
      sorted list; 
    \item if \skbd{qsort(A,B)} succeeds when called with \skbd{B}
      constrained to a list then on success \skbd{A} should be a list
      which is a permutation of \skbd{B}.
  \end{itemize}
\end{example}

%---------------------------------------------------------------------

\subsection{Conditions on the Constraint Store}

The conditions on the constraint store used in assertions are
specified by means of special literals (e.g., \skbd{list(A)},
\skbd{sorted(B)}, \skbd{list(B)}, and \skbd{permutation(B,A)}
in the previous example) that we will herein call \emph{prop}
literals. More concretely, we assume the $Pre_i$ and $Post_i$ to be
DNF formulas of such literals.

We also assume that for each prop literal $L_p$ used in some
assertion there exists a corresponding predicate $p$ defining it.
Then, we can define the meaning of prop literals as follows:

\begin{definition}[Meaning of a \emph{Prop} Literal] 
  The meaning of a prop literal $L_p$ defined by predicate $p$,
  denoted $|L_p|$, is the set of constraints given by
  $\answers{(L_p,true)}$.
\end{definition}

Intuitively, the meaning of prop literals is the set of ``weakest''
constraints for which the literal holds:

\begin{example}
  \label{ex:props}
  Prop literals \skbd{list/1} and \skbd{sorted/1} can be defined by:
    \[
    \begin{array}{lll}
      \skbd{list([]).}~~ 
    & ~~\skbd{sorted([]).}~~ 
    & ~~\skbd{sorted([\_]).}
    \\
      \skbd{list([\_|L]) :- list(L).}~~
    & ~~\skbd{sorted([X,Y|L]) :- }
    & \skbd{X =< Y, sorted([Y|L]).}
    \end{array}
    \]
  Then, their meaning is given by:
  \[
  \begin{array}{c}
    |list(A)| = \{A=[], A=[B|C] \wedge list(C)\}, \textrm{ and}
  \\
    |sorted(A)| = \{A=[], A=[B], A=[B,C|D] \wedge B\leq C
    \wedge E = [C|D] \wedge sorted(E)\}.
  \end{array}
  \]
\end{example}

The following definition from~\cite{assrt-theoret-framework-lopstr99}
defines when the condition represented by a prop literal (defined by a
program predicate) holds for a given store:

\begin{definition}[Succeeds Trivially]
  \label{def:succeeds-triv}
  A prop literal $L$ \emph{succeeds trivially} for $\theta$, denoted
  $\theta \Rightarrow_P L$, iff $\exists \theta'\in
  \answers{(L,\theta)} \mbox{ such that } \theta\models\theta'$.
  A DNF formula of prop literals succeeds trivially for $\theta$ if
  all of the prop literals of at least one conjunct of the formula
  succeeds trivially.
\end{definition}

Intuitively, a prop literal $L$ succeeds trivially if $L$ succeeds for
$\theta$ without adding new ``relevant'' constraints to $\theta$:

\begin{example} 
\label{ex:props-more}
  Consider prop literals $list(A)$ and $sorted(B)$ and the predicate
  definitions of Example~\ref{ex:props}:
\vspace{-0.75em}
  \begin{itemize}
  \item
  Assume that $\theta= (A=f)$.
  Since $\forall \theta'\in |list(A)|$ : $\theta \not\models \theta'$,
  as we would expect, $\theta \not\Rightarrow_P list(A)$.
  
  \item
  Assume now that $\theta= (A=[\underline{~~}|Xs])$.  Though $A$ is
  compatible with a list, it is not actually a (nil terminated)
  list. Again in this case $\forall \theta'\in |list(A)|$ : $\theta
  \not\models \theta'$ and thus again $\theta \not\Rightarrow_P
  list(A)$. The intuition behind this is that we cannot guarantee that
  $A$ is actually a list given $\theta$, since a possible instance of
  $A$ in $\theta$ is $A=[\underline{~~}|f]$, which is clearly not a
  list.
  
  \item
  Finally, assume that $\theta=(A=[B]\wedge B=1)$.  In such case
  $\exists \theta'= (A=[B|C]\wedge C=[])$ such that\ $\theta \models
  \theta'$ and $\exists c = (B = 1)$ such that\ $(c\wedge \theta'
  \not\models false)\wedge (\theta'\wedge c\models \theta$).  Thus, in
  this last case $\theta \Rightarrow_P list(A)$.
  \end{itemize}
\end{example}

This means that we are considering prop literals as
\emph{instantiation}
checks~\cite{prog-glob-an,assert-lang-disciplbook-short}: they are
true iff the variables they check for are at least as constrained as
their predicate definition requires.

\begin{definition}[Test Literal]
\label{def:test-literal}
  A prop literal $L$ is a \emph{test} iff $\forall \theta$ either
  $\theta \Rightarrow_P L$ or $(L,\theta)$ finitely fails.
\end{definition}

%---------------------------------------------------------------------

\subsection{First-order Assertion Conditions and their Semantics}

We represent the different checks on the constraint store imposed by a
set of assertions as a set of \emph{assertion conditions} as follows.

\begin{definition}[Assertion Conditions for a Predicate]
  \label{def:assrt-cond}
  Given a predicate represented by a normalized atom $Head$, if the
  corresponding set of assertions is $\A = \{A_1 \ldots A_n\}$, with
  $A_i = ``\texttt{:- pred } Head \texttt{ : } Pre_i \texttt{ => }
  Post_i \texttt{.}$'' the set of \emph{assertion conditions} for
  $Head$ is $\{ C_0, C_1, \ldots , C_n\}$, 
  with:
  \[
    C_i = \left\{
    \begin{array}{ll}
      \callsAsr{Head}{\bigvee _{j = 1}^{n} Pre_j}
    & ~~~~i = 0 
    \\
      \successAsr{Head}{Pre_i}{Post_i}
    & ~~~~i = 1..n
    \end{array}
    \right.
  \]
\end{definition}

If there are no assertions associated with $Head$ then the
corresponding set of conditions is empty.
The set of assertion conditions for a program is the union of the
assertion conditions for each of the predicates in the program.
Also, given a single assertion $A_i$ we define its corresponding set
of assertion conditions as $\{ C_0, C_i \}$ (this will be useful in
defining the status of an assertion).

The $\callsAsr{Head}{\ldots}$ conditions encode the checks that the
calls to the predicate represented by $Head$ are within those
admissible by the set of assertions, and we thus call them the
\emph{calls assertion conditions}.  The
$\successAsr{Head_i}{Pre_i}{Post_i}$ conditions encode the checks for
compliance of the successes for particular sets of calls, and we thus
call them the \emph{success assertion conditions}.

\begin{example}
  The assertion conditions corresponding to the predicate assertions
  for \texttt{qsort} in Example~\ref{ex:qsortasrt} are as follows:
  \begin{small}
    \[
    \begin{array}{l}
      \callsAsr{qsort(A,B)}{(list(A),list(B))}
    \\
      \successAsr{qsort(A,B)}{list(A)}{(sorted(B),list(B))}
    \\
      \successAsr{qsort(A,B)}{list(B)}{(permutation(B,A), list(A))}
    \end{array}
    \]
  \end{small}
\end{example}

In order to define the semantics of assertion conditions, we introduce 
the auxiliary partial functions $\textsf{prestep}$ and $\textsf{step}$
as follows:
\begin{align*}
  \textsf{prestep}(L_a,D)=(\theta,\sigma) \equiv &~
  \currstate{D}=\state{L::G}{\theta} \wedge
  \exists \sigma\;L=\sigma(L_a) \\
  \textsf{step}(L_a,D)=(\theta,\sigma,\theta') \equiv &~
    \currstate{D}=\state{G}{\theta'}\;
    \wedge
    \exists i\; \nthstate{D}{i} = \state{L::G}{\theta}
    \wedge \exists \sigma\;L=\sigma(L_a)
\end{align*}

Given a derivation whose current state is a call to $L_a$ (normalized
atom), the $\textsf{prestep}$ function returns the substitution
$\sigma$ for $L_a$, and the constraint store $\theta$ at the
predicate \emph{call} (i.e., just before the literal is reduced).
Given a derivation whose current state corresponds exactly to the
return from a call to $L_a$, the $\textsf{step}$ function returns
the substitution $\sigma$ for $L_a$, the constraint store $\theta$
at the call to $L_a$, and the constraint store $\theta'$ at $L_a$'s
\emph{success} (i.e., just after all literals introduced from the
body of $L_a$ have been fully reduced).
Using these functions, the semantics of our \emph{calls} and
\emph{success} assertion conditions are given by the following
definition:

\begin{definition}[Valuation of an Assertion Condition on a Derivation]
  \label{def:assrt-valuation}
  Given a \emph{calls} or \emph{success} assertion condition $C$, the
  \emph{valuation of $C$ on a derivation $D$}, denoted 
  $\textsf{solve}(C,D)$ is defined as follows: 
  \[
  \begin{array}{rl}
    \textsf{solve}(\callsAsr{L_a}{Pre},D)
    \equiv
    &
    \left( \textsf{prestep}(L_a,D)=(\theta,\sigma) \right) 
    \Rightarrow
    \left(\theta\Rightarrow_P\sigma(Pre)\right)
  \\
    \textsf{solve}(\successAsr{L_a}{Pre}{Post},D) \equiv
    &
    (\textsf{step}(L_a,D)=(\theta,\sigma,\theta'))
    \Rightarrow 
  \\
    &~~~~~~~
    ((\theta \Rightarrow_P \sigma(Pre))
    \Rightarrow
    (\theta' \Rightarrow_P \sigma(Post)))
  \end{array}
  \]
  where $L_a$ is a normalized atom.
\end{definition}

%---------------------------------------------------------------------

\subsection{Status of Assertions and Partial Correctness}
\label{sec:asrt-status}

As mentioned before, the intended use of our assertions
is to perform debugging with respect to partial correctness, i.e.,
to ensure that the program does not produce unexpected results for
\emph{valid} (``expected'') queries.\footnote{In practice, this set
  of expected queries is determined from module interfaces that 
  define the set of exported predicates.}
Thus, we extend our notion of program to include assertions and valid
queries.

\begin{definition}[Annotated Program]
  An annotated program is a tuple $(P,\Q,\A)$ where $P$ is a
  (higher-order) \emph{constraint logic program} (as defined in
  Section~\ref{sec:Prel-Notat}), $\Q$ is a set of valid queries, and
  $\A$ is a set of assertions.  As before, $\AC$ denotes the set of
  \emph{calls} and \emph{success} assertion conditions derived from
  $\A$.
\end{definition}

In the context of annotated programs we extend $derivations$ to
operate on the set of valid queries as follows:
$\derivations{\Q}=\bigcup_{Q\in{\Q}} \derivations{Q}$.
We now provide several simple definitions which will be instrumental:

\begin{definition}[Assertion Condition Status]
  \label{def:assrts-cond-status}
  Given the set of queries \Q, the assertion condition $C$ can be 
  either \emph{checked} or \emph{false}, as follows:
  \begin{align*}
  \checkedAsr{C} &\equiv \forall D\in \derivations{\Q} ~.~ 
                                             \textsf{solve}(C,D) \\
  \falseAsr{C}   &\equiv \exists D\in \derivations{\Q} ~|~ \neg 
    \textsf{solve}(C,D)
  \end{align*}
\end{definition}
\begin{definition}[Assertion Status]
  \label{def:assrt-status}
  In an annotated program $(P,\Q,\A)$ an assertion $A \in \A$ is
  \emph{checked} (\emph{false}) if all (any) of the corresponding
  assertion conditions are \emph{checked} (\emph{false}).
\end{definition}
\begin{definition}[Partial Correctness]
  An annotated program $(P,\Q,\A)$ is \emph{partially correct} w.r.t.\
  the set of assertions $\A$ and the set of queries \Q~iff
   $\forall A \in \A$, $A$ is checked for \Q.
\end{definition}

\noindent
Note that it follows immediately that a program is partially correct
if all its assertion conditions are checked.
The goal of assertion checking is thus to determine whether each
assertion $A$ is false or checked for $\Q$.  
Again, for this it is sufficient to prove the corresponding assertions
conditions false or checked.
There are two kinds of approaches to doing this (which can also be
combined). 
While it is in general not possible to try all derivations stemming
from $\Q$, an alternative is to explore a hopefully representative set
of them~\cite{testchecks-iclp09}. Though this does not allow fully
validating the program in general, it makes it possible to detect many
incorrectness problems. This approach is explored in
Section~\ref{sec:Run-Time-Checking} in the context of our higher-order
derivations.
The second approach is to use global analysis techniques and is based
on computing safe approximations of the program behavior
statically~\cite{aadebug97-informal-short,ciaopp-sas03-journal-scp-short}. 
The extension of this approach to higher-order assertions is beyond
the scope of this paper.

%---------------------------------------------------------------------

\subsection{Operational Semantics for Higher-order Programs with
  First-order Assertions}
\label{sec:Run-Time-Checking}
We now provide an operational semantics which checks whether
assertion conditions hold or not while computing the (possibly
higher-order) derivations from a query.

\begin{definition}[Labeled Assertion Condition Instance]
  \label{def:assrt-cond-lab-inst}
  Given the atom $L_a$ and the set of assertion conditions $\AC$,
  $\LabAClit{L_a}$ denotes the set of \emph{labeled} assertion condition
  instances for $L_a$ of the form $\asrId{}\#C_a$, such that
  $\exists C \in \AC$,
  $C = \callsAsr{L}{Pre}$
  (or $C = \successAsr{L}{Pre}{Post}$),
  $\sigma$ is a renaming s.t. $L = \sigma(L_a)$,
  $C_a = \callsAsr{L_a}{\sigma(Pre)}$
  (or $C_a = \successAsr{L_a}{\sigma(Pre)}{\sigma(Post)}$),
  and $\asrId{}$ is an identifier that is unique for each $C_a$.
\end{definition}

In order to keep track of the violated assertion conditions, we
introduce an extended program state of the form
\exstate{G}{\theta}{\ADeps}, where $\ADeps$ denotes the set of
identifiers for falsified assertion condition instances.
We also extend the set of literals with syntactic objects of the form
$\checkLitLab{\asrId{}}$ where 
$\asrId{}$ is an identifier for an assertion condition instance,
which we call \emph{check literals}.
Thus, a \emph{literal} is now a constraint, an atom, a higher-order
literal, or a check literal.\footnote{While check literals are simply
  instrumental here, note that they are also directly useful for
  supporting program point assertions (which are basically check
  literals that appear in the body of
  rules)~\cite{assert-lang-disciplbook-short}. However, for simplicity
  we do not discuss program point assertions in this paper.}

\begin{definition}[Reductions in Higher-order Programs with First-order
  Assertions] 
  \label{def:ho-fo-reductions}
  A state $S = \exstate{L::G}{\theta}{\ADeps}$, where $L$ is a literal
  can be \emph{reduced} to a state $S'$, denoted \reductionA{S}{S'},
  as follows:
  \begin{enumerate}\setlength{\itemsep}{0mm}\setlength{\parskip}{0mm}
  \item 
    If $L$ is a constraint or $L = X(t_1,\ldots,t_n)$, then
    $S' = \exstate{G'}{\theta'}{\ADeps}$ where $G'$ and $\theta'$
    are obtained in a same manner as in
    \reduction{\state{L::G}{\theta}}{\state{G'}{\theta'}}
  \item 
    If $L$ is an atom and $\exists (L \mbox{\tt :-} B) \in \defn{L}$,
    then $S' = \exstate{B :: PostC :: G}{\theta}{\ADeps'}$ where:
    \[
      \ADeps' = \left\{
      \begin{array}{lr}
        \ADeps \cup \{\negAsrId{}\}
      & ~\text{if } \exists \; 
        \labCallsAsr{}{L}{Pre} \in \LabAClit{L} \text{ s.t. }
        \theta \not\Rightarrow_P Pre
      \\
        \ADeps 
      & \text{otherwise}
      \end{array}
      \right.
    \]
    and $PostC$ is the sequence 
    $\checkLitLab{\asrId{1}} :: \ldots :: \checkLitLab{\asrId{n}}$
    including all the checks $\checkLitLab{\asrId{i}}$ such that
    $\labSuccessAsr{i}{L}{Pre_i}{Post_i} \in \LabAClit{L} \;\land\; 
      \theta \Rightarrow_P Pre_i$.
  \item 
    If $L$ is a check literal $\checkLitLab{\asrId{}}$, then
    $S' = \exstate{G}{\theta}{\ADeps'}$ where:
    \[
      \ADeps' = \left\{
      \begin{array}{lr}
        \ADeps \cup \{\negAsrId{}\}
        & ~~\textrm{if}~ \labSuccessAsr{}{L}{\_}{Post} \in \LabAClit{L}
          \wedge \theta \not\Rightarrow_P Post 
        \\
        \ADeps 
        & \textrm{otherwise}
      \end{array}
      \right.
    \]
  \end{enumerate}
  Note that the order in which the \emph{PostC} \emph{check} literals
  are selected is irrelevant.
\end{definition}

The set of derivations for a program from its set of queries \Q\
using the semantics with assertions is denoted $\derivationsA{\Q}$.

\begin{definition}[Error-erased Derivation]
  \label{def:error-erased-derivation}
  The set of \emph{error-erased} derivations from $\reductionA{}{}$ is
  obtained by a syntactic rewriting $\errorErase{(-)}$ that removes
  states that begin by a check literal, check literals from goals, and
  the error set. It is recursively defined as follows:
  \begin{align*}
    \errorErase{\{D_1, \dots, D_n\}} &= \{\errorErase{D_1}, \dots,
    \errorErase{D_n}\} 
    \\
    \errorErase{(S_1, \dots, S_m, S_{m+1})} &= \left\{
      \begin{array}{lr}
        \errorErase{(S_1, \dots, S_m)} 
      & \text{if}~ S_{m+1} = \exstate{\checkLitLab{\_}::\_}{\_}{\_} 
      \\
        \errorErase{(S_1, \dots, S_m)} ~\Vert~
        (\errorErase{(S_{m+1})}) 
      & \text{otherwise}
      \end{array}
    \right.
    \\
    \errorErase{\exstate{G}{\theta}{\ADeps}} 
    &= \state{\errorErase{G}}{\theta} \\
    \errorErase{(L::G)} &= \left\{
      \begin{array}{lr}
        \errorErase{G} & \text{if}~ L = \checkLitLab{\_} \\
        L::(\errorErase{G}) & \text{otherwise}
      \end{array}
    \right. \\
    \errorErase{\emptyGoal} &= \emptyGoal
  \end{align*}
  \noindent where $\Vert$ stands for sequence concatenation.
\end{definition}

\begin{theorem}[Correctness and Completeness Under Assertion Checking]
  \label{th:ans-fail}
  For any annotated program $(P,\Q,\A)$, given ${\cal D} =
  \derivations{\Q}$ and ${\cal D}' = \derivationsA{\Q}$, it holds that
  ${\cal D}$ and ${\cal D}'$ are equivalent after filtering out check
  literals and error sets (formally defined as ${\cal D} =
  \errorErase{({\cal D}')}$ in Def.~\ref{def:error-erased-derivation}).
\end{theorem}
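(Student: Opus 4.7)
The plan is to establish the set equality ${\cal D} = \errorErase{({\cal D}')}$ by two inclusions, both of which reduce to a step-by-step simulation lemma relating single reductions in $\reductionA{}{}$ to reductions (possibly zero) in $\reduction{}{}$ after erasure. Since both $\derivations{\Q}$ and $\derivationsA{\Q}$ are closed under taking prefixes, it suffices to prove the equivalence one step at a time and then lift to full derivations by induction on length.

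\textbf{Forward direction (soundness of run-time checking).} For $D' \in {\cal D}'$ I would show $\errorErase{D'} \in {\cal D}$ by induction on the number of reduction steps in $D'$. The base case is trivial: an initial one-state derivation $\exstate{L}{\theta}{\emptyset}$ contains no check literals and erases to $\state{L}{\theta}$, the start of a derivation in ${\cal D}$. For the inductive step, consider the last reduction $\reductionA{S}{S'}$ and case-split on Definition~\ref{def:ho-fo-reductions}. For a constraint or a higher-order literal (case 1), $S'$ is obtained exactly as in $\reduction{}{}$, so the erased step matches a step of $\reduction{}{}$. For an atom reduction producing $B :: PostC :: G$ (case 2), erasure collapses the $PostC$ prefix (a sequence of check literals) to nothing, leaving $B :: G$, which is precisely the result of the corresponding atom reduction in $\reduction{}{}$; the update to $\ADeps$ is discarded by erasure. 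For a check-literal reduction (case 3), the source state starts with $\checkLitLab{\_}$ and is therefore dropped entirely from the erased sequence, so the step contributes nothing new.

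\textbf{Backward direction (completeness of run-time checking).} For $D \in {\cal D}$ I would construct some $D' \in {\cal D}'$ with $\errorErase{D'} = D$, again by induction on the length of $D$. The construction mirrors $D$ reduction-by-reduction, inserting $PostC$ blocks whenever an atom is reduced and interleaving the check-literal reductions at the positions where erasure would remove them. Two observations make this work: (i) the enabling conditions for constraint, atom, and higher-order steps in Definition~\ref{def:ho-fo-reductions} coincide with those in Definition~\ref{def:ho-reductions}, so nothing in the instrumented semantics can block a step that the base semantics would take; and (ii) a check-literal reduction is always enabled (it merely updates $\ADeps$), so the inserted $PostC$ blocks can always be discharged. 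The remark that the order of $PostC$ literals is irrelevant ensures that any consistent scheduling yields the same erased derivation.

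\textbf{Main obstacle.} The delicate step is verifying the commutation between erasure and reduction in the atom-reduction case, specifically that the $PostC$ literals once added really are erased without polluting intermediate states, even when the body $B$ itself triggers further atom reductions that introduce nested $PostC$ blocks. This reduces to a structural induction on goals showing that $\errorErase{}$ strips exactly the check literals introduced by the instrumentation, which holds because check literals never appear in original clause bodies and are only ever generated at the tail of a body by Definition~\ref{def:ho-fo-reductions}(2). Once this commutation lemma is in place, both inductions go through routinely.
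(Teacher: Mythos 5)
Your proposal is correct and follows essentially the same route as the paper's own proof: a double inclusion, each established by induction on derivation length with a case analysis on the reduction rules, where check-literal steps erase to nothing and the remaining steps commute with erasure. If anything, you are slightly more explicit than the paper about the one-to-many step correspondence (discharging $PostC$ blocks before the next base-semantics step) and about why nested check literals cannot pollute erased states, but these are refinements of the same argument rather than a different approach.
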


\begin{proof}
  We will prove ${\cal D} = \errorErase{({\cal D}')}$ by showing that 
  ${\cal D} \subseteq \errorErase{({\cal D}')}$
  and 
  ${\cal D} \supseteq \errorErase{({\cal D}')}$.
  \begin{itemize}
  \item ($\subseteq$) For all $D \in {\cal D}$ exists
      $D' \in {\cal D}'$ so that $D = \errorErase{(D')}$.
  \item ($\supseteq$) For all $D' \in {\cal D}'$, 
      $D = \errorErase{(D')} \in {\cal D}$.
  \end{itemize}
  We will prove each case:
  \begin{itemize}
  %---------------------------------------------------------------------------
  \item ($\subseteq$) Let $D = (S_1,\ldots,S_n)$, $S_i =
    \state{L_i}{\theta_i}$, for some $Q = (L_1,\theta_1) \in {\cal Q}$
    and $\reduction{S_i}{S_{i+1}}$. Proof by induction on the length
    $n$ of $D$:
    \begin{itemize}
    \item Base case ($n=1$). Let $S_1' =
      \exstate{L_1}{\theta_1}{\emptyset}$. It holds that
      $\errorErase{(S_1')} =
      \errorErase{\exstate{L_1}{\theta_1}{\emptyset})} =
      \state{\errorErase{L_1}}{\theta_1} = \state{L_1}{\theta_1} =
      S_1$ (since $L_1$ does not contain any check literal).
      Thus, $\errorErase{(D')} = \errorErase{((S_1'))} =
      (\errorErase{(S_1')}) = (S_1) = D$.
    \item Inductive case (show $n+1$ assuming $n$ holds).
      For each $D_2 = (S_1,\ldots,S_n,S_{n+1})$ there exists $D_2' =
      (S_1',\ldots,S_m',S_{m+1}')$ such that $\errorErase{(D_2')} =
      D_2$.
      Given the induction hypothesis
      it is enough to show that for each $\reduction{S_n}{S_{n+1}}$
      there exists $\reductionA{S_m'}{S_{m+1}'}$, such that
      $\errorErase{(S_{m+1}')} = S_{n+1}$.
      According to $\reductionA{}{}$ (see
      Def.~\ref{def:ho-fo-reductions}), $L_{m+1}'$ and $\theta_{m+1}'$
      are obtained in the same way than in $\reduction{}{}$ (see
      Def.~\ref{def:ho-reductions}), except for the introduction of
      check literals. Since all check literals are removed in
      error-erased states, it follows that $\errorErase{(S_{m+1}')} =
      S_{n+1}$.
      \hfill $\qed$
    \end{itemize}
  %---------------------------------------------------------------------------
  \item ($\supseteq$) Let $D' = (S_1',\ldots,S_m')$, $S_i' =
    \exstate{L_i'}{\theta_i'}{\ADeps_i}$, for some $Q =
    (L_1',\theta_1') \in {\cal Q}$ and
    $\reductionA{S_i'}{S_{i+1}'}$. Proof by induction on the length
    $m$ of $D'$:
    \begin{itemize}
    \item Base case ($m=1$). It holds that $\errorErase{(S_1')}= S_1$
      (showed in base case for $\subseteq$). Then $\errorErase{(D')} =
      D \in {\cal D}$.
    \item Inductive case (show $m+1$ assuming $m$ holds).
      We want to show that given $D_2' = (S_1',\ldots,S_m',S_{m+1}')$,
      $\errorErase{(D_2')} = D_2 \in {\cal D}$.
      Given the induction hypothesis
      it is enough to show that for each $\reductionA{S_m'}{S_{m+1}'}$
      there exists $\reduction{S_n}{S_{n+1}}$ such that $S_{n+1} =
      \errorErase{(S_{m+1}')}$ (so that $(S_1,\ldots,S_n,S_{n+1}) \in
      {\cal D}$) or $S_n = \errorErase{(S_{m+1}')}$ ($D_2 = D \in
      {\cal D}$).
      According to cases of Def.~\ref{def:ho-fo-reductions}:
      \begin{itemize}
      \item If $L_m'$ begins with a check literal then
        $\errorErase{(L_{m+1}')} = \errorErase{(L_m')}$. Thus
        $\errorErase{(S_{m+1}')} = \errorErase{(S_m')} = S_n$.
      \item Otherwise, it holds that $\errorErase{(S_{m+1}')}=
        S_{n+1}$ using the same reasoning than in the inductive case
        for $\subseteq$.
      \hfill $\qed$
      \end{itemize}
    \end{itemize}
  \end{itemize}
\end{proof}

This result implies that the semantics with assertions can also be
used to obtain all answers to the original query. Furthermore, the
following theorem guarantees that we can use the proposed operational
semantics for annotated programs in order to detect (all) violations
of assertions:

\begin{definition}[Run-time Valuations of an Assertion Condition on a
  Derivation]
  \label{def:rtsolve}
  Let $\ADeps(D)$ denote the error set of the last state
  of derivation 
  $D$, $\nthstate{D}{-1}=\exstate{\_}{\_}{\ADeps}$.
  The run-time valuation of an assertion condition $C$ on a derivation
  $D$ is given by:
  \[
    \textsf{rtsolve}(C,D) \equiv
    \forall \asrId{}, C', \sigma, L ~ (\asrId{}\#C' \in \LabAClit{L}
    \wedge \sigma(C) = C') \Rightarrow \ADeps(D) \nvdash \negAsrId{}
  \]
\end{definition}
I.e., condition $\textsf{rtsolve}(C,D)$ is valid if none of the possible
instances of the assertion condition $C$ are in the error set for
derivation $D$.

\begin{theorem}[Run-time Error Detection]
  \label{thm:rtcheck}
  For any annotated program $(P,\Q,\A)$, 
    $C \in \AC $ $\mbox{is false  iff } 
    \exists\; D \in \derivationsA{\Q} \; s.t.~ \neg\textsf{rtsolve}(C,D)$.
\end{theorem}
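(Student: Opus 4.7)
The plan is to prove the two directions of the biconditional separately, using Theorem~\ref{th:ans-fail} to translate between $\derivations{\Q}$ and $\derivationsA{\Q}$, and then analyzing the cases of $\reductionA{}{}$ in Def.~\ref{def:ho-fo-reductions} to see precisely when an error identifier $\negAsrId{}$ enters the set $\ADeps$. The key observation driving both directions is that case 2 of Def.~\ref{def:ho-fo-reductions} introduces an error identifier exactly when a \emph{calls} assertion is violated at the call site, and case 3 (reduction of a check literal $\checkLitLab{\asrId{}}$ previously inserted into the $PostC$ tail by case 2) introduces one exactly when a \emph{success} assertion is violated at the corresponding return.

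For the backward direction ($\Leftarrow$), I would start from $D' \in \derivationsA{\Q}$ with $\neg\textsf{rtsolve}(C,D')$. By Def.~\ref{def:rtsolve}, some $\asrId{}\#\sigma(C) \in \LabAClit{L}$ has $\negAsrId{}\in\ADeps(D')$. I trace back to the reduction step that first introduced $\negAsrId{}$. By Def.~\ref{def:ho-fo-reductions} this step must be either case 2 (so $C$ is a \emph{calls} condition and $\theta \not\Rightarrow_P \sigma(Pre)$ held at the call state) or case 3 (so $C$ is a \emph{success} condition and $\theta \not\Rightarrow_P \sigma(Post)$ held at the check state). Take the appropriate prefix of $D'$ and apply Theorem~\ref{th:ans-fail} to obtain an error-erased prefix $D \in \derivations{\Q}$. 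By construction $\textsf{prestep}(L_a,D)$ (resp. $\textsf{step}(L_a,D)$) recovers exactly the stores that triggered the error in $D'$, so $\neg\textsf{solve}(C,D)$ holds, hence $C$ is false by Def.~\ref{def:assrts-cond-status}.

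For the forward direction ($\Rightarrow$), I would start from $D \in \derivations{\Q}$ with $\neg\textsf{solve}(C,D)$ and use the $\supseteq$ side of Theorem~\ref{th:ans-fail} to obtain $D' \in \derivationsA{\Q}$ with $\errorErase{(D')} = D$. When $C = \callsAsr{L_a}{Pre}$, the current state of $D$ has the form $\state{\sigma(L_a)::G}{\theta}$ with $\theta \not\Rightarrow_P \sigma(Pre)$; I extend $D'$ by one application of case 2 on $\sigma(L_a)$, which immediately adds $\negAsrId{}$ to the error set, yielding $D''$ with $\neg\textsf{rtsolve}(C,D'')$. When $C = \successAsr{L_a}{Pre}{Post}$, Def.~\ref{def:assrt-valuation} gives $\theta \Rightarrow_P \sigma(Pre)$ and $\theta' \not\Rightarrow_P \sigma(Post)$ at the success state; the corresponding $D'$ then has a check literal $\checkLitLab{\asrId{}}$ (inserted by an earlier case-2 step into $PostC$) that is reducible under store $\theta'$, and one further case-3 step records $\negAsrId{}$.

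The main obstacle is verifying that the alignment between $D$ and $D'$ afforded by Theorem~\ref{th:ans-fail} places the check literal $\checkLitLab{\asrId{}}$ at exactly the step where $\textsf{step}$ observes $\theta'$. This calls for a structural argument paralleling the induction in the proof of Theorem~\ref{th:ans-fail}: one shows that reducing the body of $\sigma(L_a)$ under $\reductionA{}{}$ produces the same sequence of stores as under $\reduction{}{}$, so that once the body has been fully reduced the store is precisely $\theta'$; the remark in Def.~\ref{def:ho-fo-reductions} that the order of $PostC$ check literals is irrelevant then legitimizes discharging $\checkLitLab{\asrId{}}$ at this point. A minor corner case worth flagging is when $\defn{\sigma(L_a)}$ is empty: case 2 never fires, so a pure calls violation is not caught at run time; this is absorbed by the usual well-formedness assumption that predicates appearing in calls are defined.
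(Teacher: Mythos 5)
Your proposal is correct and follows essentially the same route as the paper's own proof: both reduce the claim to matching a violation of a \emph{calls} (resp.\ \emph{success}) condition under $\textsf{solve}$ with the introduction of $\negAsrId{}$ in case 2 (resp.\ case 3) of Def.~\ref{def:ho-fo-reductions}, via the correspondence between $\derivations{\Q}$ and $\derivationsA{\Q}$. You are merely more explicit than the paper in invoking Theorem~\ref{th:ans-fail} to align the two derivation sets, and your remark about the corner case where $\defn{\sigma(L_a)}$ is empty (so a calls violation is never recorded at run time) flags a genuine implicit assumption that the paper's chain of ``iff'' steps glosses over.
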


\begin{proof}
  $A \in \AC$ is false
\newline
  $\Leftrightarrow$
  from Def.~\ref{def:assrt-status} and Def.~\ref{def:assrt-cond}
  $\exists \{C_c,C_s\}$ assertion conditions s.t. 
  $\falseAsr{C_c} \vee \falseAsr{C_s}$, where
  $C_c = \callsAsr{L}{Pre}$ and
  $C_s = \successAsr{L}{Pre}{Post}$ correspond to $A$. Let us first
  prove $\neg\textsf{rtsolve}(C_c,D)$, and then 
  $\neg\textsf{rtsolve}(C_s,D)$. 

\hfill\newline
  $\falseAsr{C_c}$
\newline
  $\Leftrightarrow$
  from Def.~\ref{def:assrts-cond-status} 
  $\exists D \in \derivations{\Q}$ s.t. 
  $\neg\textsf{solve}(C_c,D)$
\newline
  $\Leftrightarrow$
  from Def.~\ref{def:assrt-valuation}
  $(\textsf{prestep}(L,D)=(\theta,\sigma)$ $\wedge$
  $\theta\not\Rightarrow_P\sigma(Pre))$
\newline
  $\Leftrightarrow$
  from Def.~\ref{def:ho-fo-reductions} 
  $\exists\; \reductionA{S}{S'}$ where:
\begin{align*}
  S 
  &= \exstate{L :: G}{\theta}{\ADeps} \text{ s.t. }
    \exists\; \labCallsAsr{}{L}{Pre} \in \LabAClit{L}
  \\
  S' 
  &= \exstate{\_}{\theta}{\ADeps'} \wedge 
    \ADeps' = \ADeps \cup \{\negAsrId{}\}
\end{align*}
  $\Leftrightarrow$
  from Def.~\ref{def:rtsolve} 
  $\neg\textsf{rtsolve}(C_c,D)$
\hfill $\qed$

\hfill\newline
  $\falseAsr{C_s}$
\newline
  $\Leftrightarrow$
  from Def.~\ref{def:assrts-cond-status} 
  $\exists D \in \derivations{\Q}$ s.t. 
  $\neg \textsf{solve}(C_s,D)$
\newline
  $\Leftrightarrow$
  from Def.~\ref{def:assrt-valuation}
  $(\textsf{step}(L,D)=(\theta,\sigma,\theta')$ $\wedge$
  $ \theta \Rightarrow_P \sigma(Pre)$ $\wedge$
  $ \theta' \not\Rightarrow_P \sigma(Post))$
\newline
  $\Leftrightarrow$
  from Def.~\ref{def:ho-fo-reductions} 
   $\exists\; 
  \reductionA{\reductionStarA{S}{S'}}{S''}$ where:
\begin{align*}
  S 
  &= \exstate{L::G}{\theta}{\_} \wedge
    \exists\; \labSuccessAsr{}{L}{Pre}{Post} \in \LabAClit{L}
    \wedge \theta \Rightarrow_P Pre
  \\
  S' 
  &= \exstate{\checkLitLab{\asrId{}}::G}{\theta'}{\ADeps'}
    \wedge \theta' \not\Rightarrow_P Post
  \\
  S'' 
  &= \exstate{\_}{\_}{\ADeps''} \wedge
    \ADeps'' = \ADeps' \cup \{\negAsrId{}\}
\end{align*}
  $\Leftrightarrow$
  from Def.~\ref{def:rtsolve} 
  $\neg\textsf{rtsolve}(C_s,D)$
\hfill $\qed$
\end{proof}

Th.~\ref{thm:rtcheck} states that assertion condition $C$ is false iff
there is a derivation $D$ in which the run-time valuation of the
assertion condition of $C$ in $D$ is false (i.e., if at least one
instance of the assertion condition $A$ is in the error set for such
derivation $D$).
Given a set of $false$ assertion conditions we can easily derive the set
of $false$ assertions using Def.~\ref{def:assrt-cond}. 
In order to prove that any assertion is checked this has to be
done for all possible derivations for all possible queries, which is
often not possible in practice. This is why analysis based on
abstractions is often used in practice for this purpose.

%%%%%%%%%%%%%%%%%%%%%%%%%%%%%%%%%%%%%%%%%%%%%%%%%%%%%%%%%%%%%%%%%%%%%%

\section{Higher-order Assertions on Higher-order Derivations}
\label{sec:ho-ho}

Once we have established basic results for the case of first-order
assertions in the context of higher-order derivations, we extend the
notion of assertion itself to the higher-order case.  The motivation
is that in the higher-order context terms can be bound to predicates
and our aim is to also be able to state and check properties of such
predicates.

%---------------------------------------------------------------------

\subsection{Anonymous Assertions}

We start by generalizing the notion of assertion to include
\emph{anonymous assertions}: assertions where the predicate symbol is
a variable from $\VS$, which can be instantiated to any suitable
predicate symbol from $\PS$ to produce non-anonymous assertions. 
An anonymous assertion is an expression of the from 
``$\skbd{:- pred } L \skbd{ : } Pre \skbd{ => }Post$'', where $L$
is of the form $X(V_1, \ldots,V_n)$ and $Pre$ and $Post$ are
DNF formulas of \emph{prop} literals.

\begin{example}
  The anonymous assertion 
  ``\skbd{:- pred X(A,B) : list(A) => list(B).}''
  states that any predicate $p \in P$ that $X$ is constrained to
  should be of arity 2, it should be called with its first argument
  instantiated to a list, and if it succeeds, then its second argument
  should be also a list on success.
\end{example}

We now introduce \emph{predprops}, which 
gather a number of anonymous assertions in order to fully describe
variables containing higher-order terms (predicate symbols), similarly
to how \emph{prop} literals describe conditions for variables
containing first-order terms. 

\begin{definition}[Predprop]
  \label{def:pp-literal}
  Given $Pre_i$ and $Post_i$ conjunctions of \emph{prop} literals, a
  \emph{predprop} $pp(X)$ is an expression of the form:
  \[
  \begin{array}{ll}
    \kbd{pp(X)}  \kbd{\{}    
  & 
    \kbd{:- pred } X(V_1,\ldots,V_m)
    \kbd{ : } Pre_1 \kbd{ => } Post_1.
  \\ 
  & 
       \ldots
  \\
  & 
    \kbd{:- pred } X(V_1,\ldots,V_m)
    \kbd{ : } Pre_n \kbd{ => } Post_n. ~~\kbd{\}}
  \end{array}
  \]
\end{definition}

\begin{definition}[Anonymous Assertion Conditions for a 
  \emph{predprop}]
  \label{def:assrts-mapping}
  The corresponding set of \emph{anonymous assertion conditions}
  for the predprop $pp(X)$ is defined as 
  $\ACpp{pp(X)} = \{C_i[X]\;|\;i=0..n\}$ where:
  \[
    C_i[X] = \left\{
    \begin{array}{ll}
      \callsAsr{X(V_1,\ldots,V_m)}{Pre} 
    & ~~~~i = 0 
    \\
      \successAsr{X(V_1,\ldots,V_m)}{Pre_i}{Post_i}   
    & ~~~~i = 1..n
    \end{array}
    \right.
  \]
  The variable $X$ can be instantiated to a particular predicate
  symbol $q \in \PS$ to produce a set of non-anonymous assertion
  conditions $\ACpp{pp(p)}$ for $q$.
\end{definition}

\begin{example}
  Consider defining a \skbd{comparator(Cmp)} \emph{predprop} that
  describes predicates of arity 3 which can be used to compare
  numerical values:
  \begin{small}
    \begin{verbatim}
      comparator(Cmp) {
        :- pred Cmp(X,Y,Res) : int(X),int(Y) => between(-1,1,Res).
        :- pred Cmp(X,Y,Res) : flt(X),flt(Y) => between(-1,1,Res). }.
    \end{verbatim}
  \end{small}
  The \kbd{comparator(Cmp)} \emph{predprop} includes two anonymous
  assertions describing a set of possible preconditions and
  postconditions for predicates of this kind. In this example:
  \begin{small}
    \[
    \begin{array}{ll}
      \ACpp{comparator(Cmp)} = \{ 
    & \\
    ~~\callsAsr{Cmp(X,Y,Res)}{(int(X) \wedge int(Y)) \vee 
                                           (flt(X) \wedge flt(Y))},
    & \\
    ~~\successAsr{Cmp(X,Y,Res)}{int(X) \wedge int(Y)}{between(-1,1,Res)}
    & \\
    ~~\successAsr{Cmp(X,Y,Res)}{flt(X) \wedge flt(Y)}{between(-1,1,Res)} 
    & 
      \}
    \end{array}
    \]
  \end{small}
\end{example}

\begin{example}
  Fig.~\ref{ex:pp-calls} provides a larger example. This example
  is more stylized for brevity, but it covers a good subset of the
  relevant cases, 
  used later to illustrate the semantics.
\end{example}

\begin{definition}[Meaning of a \emph{predprop} Literal]
  \label{def:pp-meaning}
  The \emph{meaning} of a \emph{predprop} $pp(X)$, denoted $|pp(X)|$
  is the set of constraints 
  $\{ X = q ~|~ q \in \PS,\; \forall \_\#C \in \ACpp{pp(q)}:
    \checkedAsr{C} \}$.
\end{definition}

A predicate given by its predicate symbol $p \in \PS$ is
\emph{compatible}
with a \emph{predprop} $pp(X)$ if all the assertions resulting from
$pp(p)$ are \emph{checked} for all possible queries in an annotated
program.

%---------------------------------------------------------------------

\begin{figure}
\vspace*{-1mm}
  \centering
    \begin{small}
\begin{verbatim}
        :- nneg(P) { :- pred P(X) => nnegint(X). }.
        :- neg(P)  { :- pred P(X) =>  negint(X). }.

        :- pred test_c(P,N) : nneg(P).
        :- pred test_c(P,N) :  neg(P).
        test_c(P,N) :- P(N).

        :- pred test_s(N,P) : nnegint(N) => nneg(P).
        :- pred test_s(N,P) :  negint(N) =>  neg(P).
        test_s( 1,P) :- P = z. % bug here, should be P = p
        test_s(-1,P) :- P = n.

        z(1). z(-2).   p(1). p(2).   n(-1). n(-2).   c(a). c(b).
\end{verbatim}
    \end{small}
  \caption{Sample Program with predprops.}
  \label{ex:pp-calls}
\vspace*{-1mm}
\vspace*{-4mm}
\end{figure}

%---------------------------------------------------------------------

\subsection{Operational Semantics for  Higher-order Programs with
Higher-order Assertions}
We now discuss several alternative operational semantics for
higher-order programs with higher-order assertions. In all cases the
aim of the semantics is to check whether assertions with
\emph{predprops} hold or not during the computation of the
derivations from a query.

%    ----   ----    ----    ----    ----    ----    ----    ----    -

\subsubsection{Checking with Static \emph{predprops}} 
According to Definition~\ref{def:pp-meaning}, a \emph{predprop}
literal $pp(X)$ denotes the subset of predicates for which
all the associated assertions are checked.
When that set of assertions can be statically computed, then $\theta
\Rightarrow_P Cond$ can be used for both prop and predprop $Cond$
literals, and the operational semantics is identical
to the one for the higher-order programs and regular assertions.

We will denote as $\reductionHAs{S}{S'}$ a reduction from a state $S$
to a state $S'$ under the semantics for higher-order derivations in
programs with assertions that may contain higher-order properties,
which are statically precomputed.
Thus, state reductions are performed as follows:
\[
\frac{
  \reductionA{\exstate{G}{\theta}{\ADeps}}
             {\exstate{G'}{\theta'}{\ADeps'}}
}{\reductionHAs{\exstate{G}{\theta}{\ADeps}}
              {\exstate{G'}{\theta'}{\ADeps'}}}
\]
\noindent The meaning of each predprop, $|pp(X)|$, can be inferred or
checked (if given by the user) by static analysis.

In this semantics, given the program shown in Fig.~\ref{ex:pp-calls}
and the goal \texttt{test\_c(z,-2)}, assertions are detected to be
false since $\{P=\texttt{z}\} \not\subset |\texttt{neg(P)}|$ and 
$\{P=\texttt{z}\} \not\subset |\texttt{nneg(P)}|$.

%---------------------------------------------------------------------

\subsubsection{Checking with Dynamic \emph{predprops}}
Given the difficulty in determining the meaning of $|pp(X)|$
statically, we also propose a semantics with dynamic checking.
We start with an over-approximation of each predprop $|pp(X)| = \{ X =
p ~|~ p \in \PS \}$ and incrementally remove predicate symbols, as
violations of assertion conditions are detected:
\begin{itemize}
\item we can detect when some assertion condition instance is violated
  (Def.~\ref{def:ho-fo-reductions});
\item we need a way to obtain a set of assertion condition instances
  from predprops (anonymous asserion condition instances);
\end{itemize}
We do that by defining instantiations of anonymous assertion
conditions for particular predicate symbols and the dependencies among
those instances.

The following two definitions extend the notion of assertion condition
instances from Def.~\ref{def:assrt-cond-lab-inst} to the case of
anonymous assertion conditions and higher-order literals:
\begin{definition}[Labeled Hypothetical Assertion Condition]
  \label{def:lab-hyp-assrt-cond}
  Given a predprop $pp(X)$ and a predicate symbol $p \in \PS$,
  $\LabACpp{pp(p)}$ denotes the set of \emph{labeled} hypothetical
  assertion conditions of the form $\hypId{}\#C_p$,
  such that
  $C[X] \in \ACpp{pp(X)}$ (Def.~\ref{def:assrts-mapping}),
  $L = X(V_1,\ldots,V_n)$,
  $L_p = p(V_1,\ldots,V_n)$,
  $C_p$ is defined as:
  \[
    C_p = \left\{
    \begin{array}{ll}
      \callsAsr{L_p}{Pre} 
    & ~~\textrm{if}~~ C[X] = \callsAsr{L}{Pre}
    \\
      \successAsr{L_p}{Pre}{Post}   
    & ~~\textrm{if}~~ C[X] = \successAsr{L}{Pre}{Post}
    \end{array}
    \right.
  \]
  and $\hypId{}$ is an identifier that is unique for each $C_p$.
\end{definition}

In this semantics we allow the assertion condition instances to be
derived from the hypothetical assertion conditions in the same way, as
in Def.~\ref{def:assrt-cond-lab-inst}.
However, the violation of such an instance has to be treated in a
special way, as it does not signal the violation of its conditions,
but instead of the corresponding predprop.
For simplicity, we also introduce a special label \hypId{0} to denote
the assertion conditions that appeared originally in the program.
The error set $\ADeps$ in Def.~\ref{def:ho-fo-reductions} contained
negated assertion condition instance identifiers. Now we extend this
set with \emph{assertion dependency rules} of the form 
$\bigwedge (\bigvee \negAsrId{}) \rightarrow \negAsrId{}$.
The following definitions provide the description of how such
dependencies are generated.

\begin{definition}[Literal Simplification]
  The \emph{simplification} of a literal $L$ w.r.t.
  $\theta$ is defined as:
  \[
    \simp{L} = \left\{
    \begin{array}{ll}
      L    
    & ~~\textrm{if}~~ L \textrm{ is a predprop}
    \\
      true   
    & ~~\textrm{if}~~ \theta \Rightarrow_P L 
    \\
      false  
    & ~~\textrm{if}~~ \theta \not\Rightarrow_P L
    \end{array}
    \right.
  \]
  We extend this definition for a conjunction of
  literals.
\end{definition}

\begin{definition}[Extension of $\AC$ and $\ADeps$ for dynamic
  predprop checking]
  \label{def:ae-extension}
  Given the label $\asrId{}$ of an assertion condition instance and a
  formula of the form $Props =
  \bigvee_{i=1}^{n}(\bigwedge_{j=0}^{m(i)}Prop_{ij})$, where
  $Prop_{ij}$ is either a prop or predprop literal, the
  \emph{extension} of $\AC$ and $\ADeps$ for dynamic
  predprop checking, denoted as
  $\ext{\asrId{}}{Props} = 
  (\Delta\AC,\Delta\ADeps)$, is obtained as follows:
  \begin{enumerate}
  \item
    if $\simp{Props} = false$,
    then $\Delta\AC = \emptyset$ and $\Delta\ADeps = \{\negAsrId{}\}$;
  \item
    otherwise: $\Delta\AC = \bigcup_{i=1}^n \AC^i,\;$ and 
    $ \Delta\ADeps = \{\bigwedge_{i = 1}^{n}
      (\bigvee_{\hypId{} \in H_i} \negHypId{})
      \rightarrow \negAsrId{}\}$ where:
    \[
    \begin{array}{rl}
      \AC^i = 
      &\{\hypId{}\#C \in \LabACpp{Prop_{ij}}\;|\;
      0 \leq j \leq m(i), Prop_{ij} = pp_{ij}(X_{ij}),
    \\
      &\;\;\;\; pp_{ij}(X_{ij}) \textrm{ is a predprop and } X_{ij} 
      \textrm{ is bound to some } q \in \PS\}.
    \\
      H_i =
      & \{\hypId{}\;|\;\hypId{}\#\_ \in \AC^i\},\;
    \end{array}
    \]
  \end{enumerate}
\end{definition}

% ----    ----    ----    ----    ----    ----    ----    ----    ----

We will denote as $\reductionHAd{S}{S'}$ a reduction from a state $S$
to a state $S'$ under the current semantics. 
\begin{definition}[Reductions in Higher-order Programs with 
Higher-order Assertions]
  \label{def:ho-ho-reductions}
  A state $S = \exstate{L::G}{\theta}{\ADeps}$, where $L$ is a literal
  can be \emph{reduced} to a state $S'$, denoted \reductionHAd{S}{S'},
  as follows:
  \begin{enumerate}\setlength{\itemsep}{0mm}\setlength{\parskip}{0mm}
  \item 
    If $L$ is a constraint or $L = X(t_1,\ldots,t_n)$, then
    $S' = \exstate{G'}{\theta'}{\ADeps}$ where $G'$ and $\theta'$
    are obtained in a same manner as in
    \reductionA{\state{L::G}{\theta}}{\state{G'}{\theta'}};
  \item 
    If $L$ is an atom and $\exists (L \mbox{\tt :-} B) \in \defn{L}$,
    then for each $\asrId{i}\#C_i \in \LabAClit{L}$:
      \begin{align*}
      \hypId{i} 
      &= \left\{
        \begin{array}{ll}
        \hypId{}
        & \text{if } C_i\text{ is an instance of some }\hypId{}\#C 
          \in \AC 
        \\
        \hypId{0}
        & otherwise
        \end{array} \right.
      \\
      (\Delta_i\AC,\Delta_i\ADeps) 
      &= \left\{
        \begin{array}{ll}
          \ext{\asrId{i}}{Pre} 
          & \textrm{if } C_i = \callsAsr{L}{Pre}
          \\
          (\emptyset,\emptyset) 
          & \textrm{otherwise}
        \end{array} \right.
      \\
      PostC_i 
      &=\left\{
        \begin{array}{ll}
          \checkLitLab{\asrId{i}} 
        & \text{if } C_i = \successAsr{L}{Pre_i}{Post_i} 
        \\
        & \text{ and } \simp{Pre_i} = true
        \\
          true
        & \textrm{otherwise}
        \end{array} \right.
    \end{align*}
      and $S' = \exstate{B :: PostC :: G}{\theta}{\ADeps'}$,
      where $\ADeps' = 
       \ADeps \cup \bigcup_i \{\negAsrId{i} \rightarrow \negHypId{i}\}
       \cup \bigcup_i \Delta_i\ADeps$,
      $\AC' = \AC \cup \bigcup_i \Delta_i\AC$,
      and $PostC$ is the sequence $PostC_1::\ldots::PostC_n$
      (simplifying \texttt{true} literals).
  \item 
    If $L$ is a check literal $\checkLitLab{\asrId{}}$ and
    $\labSuccessAsr{}{L'}{\_}{Post} \in
    \LabAClit{L'}$, then $S' = \exstate{G}{\theta}{\ADeps'}$ where
    $(\Delta\ADeps,\Delta\AC) = \ext{\asrId{}}{Post}$,
    $\ADeps' = \ADeps \cup \Delta\ADeps$ and $\AC' = \AC \cup
      \Delta\AC$.
  \end{enumerate}
\end{definition}

Note that in this semantics we support more than one \emph{calls}
assertion condition per predicate (as several predprops may be applied
to the same predicate symbol).
Also note that in general we cannot prove with dynamic checking that
a predprop is $true$. So, as a safe approximation we treat
preconditions in such \emph{success} assertion conditions as
$false$.

\begin{definition}[Trivial Assertion Condition]
  An assertion condition $C$ is \emph{trivial} if it is of the form
  $\callsAsr{\_}{true}$ or $\successAsr{\_}{\_}{true}$. It is also
  assumed that for any predprop $pp(X)$ $\ACpp{pp(X)}$ does not contain
  trivial assertion conditions.
\end{definition}

\begin{theorem}[Higher-order Run-time Checking]
  \label{thm:rtcheckHO}
  For any annotated program $(P,\Q,\A)$, if
  $\exists D \in \derivationsHAd{\Q} \textsf{ s.t. } 
    \neg\textsf{rtsolve}(C,D) \Rightarrow$ 
  $C \in \AC \textrm{ is } false$.
\end{theorem}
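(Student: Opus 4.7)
The plan is to mirror the structure of the proof of Theorem~\ref{thm:rtcheck}, adapted to the richer error set of the dynamic predprop semantics. By Def.~\ref{def:rtsolve}, $\neg\textsf{rtsolve}(C,D)$ unfolds to the existence of some $\asrId{}\#C' \in \LabAClit{L}$ with $\sigma(C)=C'$ and $\ADeps(D)\vdash\negAsrId{}$; by Def.~\ref{def:assrt-status} and Def.~\ref{def:assrts-cond-status}, the goal is then to exhibit a $D^{*}\in\derivations{\Q}$ such that $\neg\textsf{solve}(C,D^{*})$, i.e., to reduce the runtime witness in $\derivationsHAd{\Q}$ to a standard-semantics witness.

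I would proceed by induction on the height of the proof of $\ADeps(D)\vdash\negAsrId{}$, using the two forms of entries in $\ADeps$ given by Def.~\ref{def:ae-extension} and Def.~\ref{def:ho-ho-reductions}: direct facts $\negAsrId{}$ added when $\simp{Props}=false$, and assertion dependency rules $\bigwedge_i (\bigvee_{\hypId{}\in H_i}\negHypId{})\rightarrow\negAsrId{}$. In the base case the relevant entry is a direct fact; inspection of cases~2 and~3 of Def.~\ref{def:ho-ho-reductions} combined with the first clause of Def.~\ref{def:ae-extension} shows that $\negAsrId{}$ was added precisely because the corresponding $Pre$ (for $C=\callsAsr{L}{Pre}$) or $Post$ (for $C=\successAsr{L}{Pre}{Post}$) failed to succeed trivially in the store of that reduction. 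The resulting chain of equivalences is then identical to the one used for $\reductionA{S}{S'}$ in Theorem~\ref{thm:rtcheck} (splitting on $C$ being a calls or a success condition), yielding $\neg\textsf{solve}(C,D^{*})$ where $D^{*}$ is the derivation obtained from $D$ by erasing check literals and the error/dependency machinery.

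For the inductive case, $\negAsrId{}$ is derived via some assertion dependency rule produced by $\textsf{ext}$. Each $\hypId{}\in H_i$ in such a rule labels, by Def.~\ref{def:lab-hyp-assrt-cond}, an instance $C_p$ of a predprop $pp(X_{ij})$ whose variable $X_{ij}$ was bound to some $q\in\PS$ at the reduction step where the rule was emitted. If the rule fires, then for each conjunct $i$ there is some $\hypId{}\in H_i$ with $\ADeps(D)\vdash\negHypId{}$ by a strictly shorter proof; the induction hypothesis applied to the corresponding $C_p$ then gives that the predprop assertion condition is $false$, equivalently $q\notin |pp(X_{ij})|$ by Def.~\ref{def:pp-meaning}. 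Collecting one such witness per conjunct $i$ shows that $\simp{Props}$ evaluated in the standard semantics at the call (or success) point of $L$ would also return $false$, so again $\neg\textsf{solve}(C,D^{*})$ follows by the same equivalences as in the base case.

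The main obstacle is establishing a clean erasure analog of Def.~\ref{def:error-erased-derivation} for the dynamic predprop semantics, so that from any $D\in\derivationsHAd{\Q}$ one obtains a well-formed $D^{*}\in\derivations{\Q}$ sharing the call/success states of $L$ up to the error-set components; this is the natural counterpart of Theorem~\ref{th:ans-fail} for the new semantics and must itself be proved by a structural argument on reductions, additionally observing that the extra bookkeeping of hypothetical labels and dependency rules never affects the underlying goal or store. Once this scaffolding is in place, translating a hypothetical-label violation into an actual violation of the enclosing $C$ becomes a routine propagation of substitutions and indices through the dependency rules, and the calls/success split reduces to the computation already carried out in Theorem~\ref{thm:rtcheck}.
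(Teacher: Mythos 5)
Your proposal is correct in outline and reaches the paper's conclusion, but it is organized quite differently from the paper's own argument. The paper does not induct on the derivation of $\ADeps(D)\vdash\negAsrId{}$: it restricts attention to the case where $C$ is falsified through a predprop, assumes without loss of generality that $\ACpp{pp(X)}$ has cardinality one, splits on whether $C$ is a \emph{calls} or a \emph{success} condition, and traces one explicit chain of states $S_1,\ldots$ in which exactly one dependency hop $\negAsrId{}''\rightarrow\negHypId{}\rightarrow\negAsrId{}'$ fires; the violation of the inner, first-order condition $C''$ is discharged by invoking Th.~\ref{thm:rtcheck} directly, Def.~\ref{def:pp-meaning} converts $\neg\checkedAsr{C''}$ into $(X=q)\notin|pp(X)|$, and the rest is the same descent through $\textsf{solve}$ to $\falseAsr{C}$ that you describe. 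Your induction on the height of the entailment proof buys uniformity: it covers the purely first-order base case, arbitrary DNF structure in $Props$ (one violated $\hypId{}\in H_i$ per disjunct), and in principle nested predprops, none of which the paper's single-hop trace treats explicitly; the price is that you must strengthen the induction hypothesis so that it applies to the hypothetical conditions added dynamically to $\AC$ (which are not literally among the $C\in\AC$ quantified in the statement), and you should phrase the conclusion of the inductive step as $\theta\not\Rightarrow_P Props$ rather than as $\simp{Props}=false$, since $\textsf{simp}$ leaves predprop literals unevaluated. Both arguments rest on the same unproved scaffolding --- a correspondence between $\derivationsHAd{\Q}$ and $\derivations{\Q}$ analogous to Th.~\ref{th:ans-fail}, needed so that $\textsf{prestep}$, $\textsf{step}$ and $\textsf{solve}$ can be evaluated on a plain derivation --- which the paper applies silently to $D$ itself, whereas you correctly isolate it as a lemma that must be proved separately.
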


\begin{proof}
  In this proof we reflect the case when an assertion condition is
  falsified because of some of its predprops violation.
  To do so it is enough to show that at most one predprop was violated.  
  Let us first prove the theorem for the case when the falsified
  assertion condition is $C_c = \callsAsr{L}{pp(X)}$ and then for the
  case $C_s = \successAsr{L}{Pre}{pp(X)}$, where $pp(X)$ is a predprop.
  Without the loss of generality we assume that $\ACpp{pp(X)}$ has
  cardinality of 1 (which is a case when $pp(X)$ consists of one
  anonymous assertion and one of the corresponding anonymous assertion
  conditions is trivial).

\hfill\newline  
  $\neg\rtsolve{C_c}{D}$
\newline
  $\Leftrightarrow$ From Def.~\ref{def:rtsolve}: 
  $\exists \asrId{}', C_c', \sigma, L ~ (\asrId{}'\#C_c' \in \LabAClit{L})
    \wedge (\sigma(C_c) = C_c') \wedge (\ADeps(D) \vdash \negAsrId{}')$
\newline
  $\Rightarrow$ From Def.~\ref{def:ho-ho-reductions} and $\ADeps(D)
   \vdash \negAsrId{}'$ it must hold 
  $D = (\ldots,S_1,\ldots,S_2,S_3\ldots,S_4,\ldots)$ where:
  \[
  \begin{array}{ll}
    S_1 = \exstate{L'::\_}{\theta_1}{\_}
    & \text{s.t. } \exists\; \clause{L'}{B'} \in \defn{L}, 
      \asrId{}'\#\callsAsr{L'}{\sigma(pp(X))} \in \LabAClit{L},
    \\
    & ~~~~~\theta_1 \models (X = q), q \in \PS
    \\
    S_2 = \exstate{L_2::\_}{\_}{\ADeps_2}
    & \text{s.t. }
      \{\negHypId{}   \rightarrow \negAsrId{}',
        \negAsrId{}' \rightarrow \negHypId{0} ,\} \in \ADeps_2,
      \hypId{}\#C_q \in \LabACpp{pp(q)}, L_2 = q(\ldots)
    \\
    S_3 = \exstate{\_}{\_}{\ADeps_3}
    & \text{s.t. } 
      \{\negAsrId{}''  \rightarrow \negHypId{} \} \in \ADeps_3,\;
      \asrId{}''\#C_c'' \in \LabAClit{L_2}
    \\
    S_4 = \exstate{\_}{\_}{\ADeps_4}
    & \text{s.t. } \ADeps_4 \vdash \negAsrId{}''
  \end{array}
  \]
  $\Rightarrow$ From $\ADeps_3 \vdash \negAsrId{}''$ and
  Th.~\ref{thm:rtcheck} we know that $\neg\checkedAsr{C_c''}$
  and thus $(X = q) \not\in |pp(X)|$ according to
  Def.~\ref{def:pp-meaning}.
\newline
  $\Rightarrow$ From Def.~\ref{def:succeeds-triv}
  it follows that $\theta_3 \not\Rightarrow_P pp(q)$
\newline
  $\Rightarrow$
  Given the state $S_1$ before the call to $L'$ and the
  state $S_3$:
  $ ( \textsf{prestep}(L,D)=(\theta_3,\sigma) ) $
  $ \wedge $
  $ (\theta' \not\Rightarrow_P\sigma(pp(X))) $
\newline
  $\Rightarrow$ From Def.~\ref{def:assrt-valuation}
  $\neg\textsf{solve}(C_c,D)$
%\newline
  $\Rightarrow$ From Def.~\ref{def:assrts-cond-status} $\falseAsr{C_c}$
  \hfill $\qed$

\hfill\newline  
  $\neg\rtsolve{C_s}{D}$
\newline
  $\Leftrightarrow$ From Def.~\ref{def:rtsolve}: 
  $\exists \asrId{}', C_s', \sigma, L ~ (\asrId{}'\#C_s' \in \LabAClit{L})
    \wedge (\sigma(C_s) = C_s') \wedge (\ADeps(D) \vdash \negAsrId{}')$
\newline
  $\Rightarrow$ From Def.~\ref{def:ho-ho-reductions} and $\ADeps(D)
  \vdash \negAsrId{}'$ it must hold 
  $D =
  (\ldots,S_1,S_2,\ldots,S_3,S_4,\ldots,S_5,S_6,\ldots,S_7,\ldots)$
  where: 
  \[
  \begin{array}{ll}
    S_1 = \exstate{L'::\_}{\theta_1}{\_} & \text{s.t.}~
      \exists\; \clause{L'}{B'} \in \defn{L}, \\
    & ~~~~~\asrId{}'\#\successAsr{L'}{\sigma(Pre)}{\sigma(pp(X))} \in
      \LabAClit{L}, 
    \\
    & ~~~~~\theta_1 \Rightarrow_P \sigma(Pre) 
    \\
    S_2 = \exstate{B'::\checkLitLab{\asrId{}'}::\_}{\_}{\ADeps_2} 
    & \text{s.t. }
      \{\negAsrId{}' \rightarrow \negHypId{0}\} \in \ADeps_2
    \\
    S_3 = \exstate{\checkLitLab{\asrId{}'}::\_}{\_}{\_}
    &
    \\
    S_4 = \exstate{\_}{\theta_4}{\ADeps_4} 
    & \text{s.t. }\theta_4 \models (X = q), q \in \PS, 
     \{\negHypId{} \rightarrow \negAsrId{}'\} \in \ADeps_4,
    \\
    & ~~~~~\hypId{}\#C_q \in \LabACpp{pp(q)}.
    \\
    S_5 = \exstate{L_5::\_}{\_}{\_}
    & \text{s.t. } L_5 = q(\ldots)
    \\
    S_{6} = \exstate{\_}{\_}{\ADeps_6}
    & \text{s.t. } \{\negAsrId{}'' \rightarrow \negHypId{}\} \in
     \ADeps_6 \text{ where } \asrId{}''\#C_s'' \in \LabAClit{L_5}
    \\
    S_7 = \exstate{\_}{\theta_7}{\ADeps_7}
    & \text{s.t. } \ADeps_7 \vdash \negAsrId{}''
  \end{array}
  \] 
  $\Rightarrow$ From $\ADeps_7 \vdash \negAsrId{}''$ and
  Th.~\ref{thm:rtcheck} we know that $\neg\checkedAsr{C_s''}$
  and thus $(X = q) \not\in |pp(X)|$ according to
  Def.~\ref{def:pp-meaning}.
\newline
  $\Rightarrow$ From Def.~\ref{def:succeeds-triv} it follows that 
  $\theta_7 \not\Rightarrow_P pp(q)$
\newline
  $\Rightarrow$ Given the state $S_1$ before the call to $L'$ and the
  state $S_7$: 
  $(\textsf{step}(L,D)=(\theta_1,\sigma,\theta_7))$ $\wedge$
  $(\theta_1 \Rightarrow_P \sigma(Pre))$ $\wedge$
  $(\theta_7 \not\Rightarrow_P \sigma(pp(X)))$ for
  $\asrId{}'\#C_s' \in \LabAClit{L}$
\newline
  $\Rightarrow$ From Def.~\ref{def:assrt-valuation}
  $\neg\textsf{solve}(C_s,D)$
%\newline
  $\Rightarrow$ From Def.~\ref{def:assrts-cond-status} $\falseAsr{C_s}$
  \hfill $\qed$

\end{proof}

Let us trace finished derivations $D^1,D^2$ and $D^3$ from the queries
$Q_1 = (\kbd{test\_c(n,X)}, true)$, $Q_2 = (\kbd{test\_c(c,X)}, true)$
and $Q_3 = (\kbd{(test\_s(1,P),P(-2))}, true)$, respectively, to the
program in Fig.~\ref{ex:pp-calls}.

% ----    ----    ----    ----    ----    ----    ----    ----    ----
\begin{table}[h]
  \caption{A derivation of the query (\kbd{test\_c(n,X)}, $true$) to
    the program in Fig.~\ref{ex:pp-calls}.}
  \begin{minipage}{\textwidth}
  \small
    \begin{tabular}{p{0.18\textwidth}|p{0.14\textwidth}%
                    p{0.18\textwidth}p{0.46\textwidth}}
      \hline
      G         
          & $\Delta\theta$  
          & $\Delta\ADeps$ 
          & $\Delta$(labeled instances + hypothetic $\AC$)
      \\ \hline
      {\tt test\_c(n,X)} 
          & $P = n$\newline
            $N = -1$\newline
            $X = N$    
          & $\negAsrId{1} \rightarrow \negHypId{0}$\newline
            $\negHypId{1} \wedge \negHypId{2}
             \rightarrow \negAsrId{1}$
          & $\labCallsAsr{1}{test\_c(n,X)}{nneg(n) \vee neg(n)}$\newline
            $\hypSuccessAsr{1}{n(Z)}{true}{nnegint(Z)}$\newline
            $\hypSuccessAsr{2}{n(Z)}{true}{negint(Z)}$ 
      \\ \hline
      {\tt P(-1)}
          & $Z = -1$         
          & $\negAsrId{2} \rightarrow \negHypId{1}$\newline
            $\negAsrId{3} \rightarrow \negHypId{2}$
          & $\labSuccessAsr{2}{n(-1)}{true}{nnegint(-1)}$\newline
            $\labSuccessAsr{3}{n(-1)}{true}{negint(-1)}$ 
      \\ \hline
      \checkLitLab{$\asrId{2}$},\newline
      \checkLitLab{$\asrId{3}$}         
          & -         
          & $\negAsrId{2}$
          & -
      \\ \hline
      \checkLitLab{$\asrId{3}$} 
          & -          
          & -
          & -
      \\ \hline
      \emptyGoal
          & -
          & -
          & -
      \\ \hline
    \end{tabular}
  \ \\
  \end{minipage}
  \label{tbl:test1}
\end{table}

In  $\nthstate{D^1}{1}$ we encounter two assertions for
\kbd{test\_c/2} with a predprop in each precondition and trivial
postconditions.
According to state reduction rules, $\Delta\AC$ consists of calls
assertion condition instance $\asrId{1}$ and two hypothetical
assertion conditions $\hypId{1}$ and $\hypId{2}$, derived from
predprops \skbd{nneg/1} and \skbd{neg/1}, and $\Delta\ADeps =
\{\negAsrId{1} \rightarrow \negHypId{0}, \negHypId{1} \wedge
\negHypId{2} \rightarrow \negAsrId{1}\}$.
In $\nthstate{D^1}{2}$ and current goal \kbd{P(-1)} (which is
implicitly reduced as \kbd{n(-1)}), success assertion condition
instances $\asrId{2}$ and $\asrId{3}$ are derived from the hypotheses
$\hypId{1}$ and $\hypId{2}$, and $\Delta\ADeps = \{\negAsrId{2}
\rightarrow \negHypId{1}, \negAsrId{3} \rightarrow \negHypId{2} \}$.
Consequently, two check literals, $\checkLitLab{2}$ and
$\checkLitLab{3}$ are added to the goal sequence.
In states $\nthstate{D^{1}}{3}$ and $\nthstate{D^1}{4}$ those literals
are reduced, which results in adding $\negAsrId{2}$ to $\ADeps$
because \skbd{nnegint(-1)} property from the postcondition of
$\asrId{2}$ is violated.
This example shows that the mechanism of dependencies between
assertion conditions allows to avoid ``false negative'' results in
assertion checking.

% ----    ----    ----    ----    ----    ----    ----    ----    ----
\begin{table}[h]
  \caption{A derivation of the query (\kbd{test\_c(c,X)}, $true$) to
    the program in Fig.~\ref{ex:pp-calls}.}
  \begin{minipage}{\textwidth}
    \small
    \begin{tabular}{p{0.18\textwidth}|p{0.14\textwidth}%
                    p{0.18\textwidth}p{0.46\textwidth}}
      \hline
      G         
          & $\Delta\theta$  
          & $\Delta\ADeps$ 
          & $\Delta$(labeled instances + hypothetic $\AC$)
      \\ \hline
      {\tt test\_c(c,X)}     
          & $P = c$\newline
            $N = a$\newline
            $X = N$    
          & $\negAsrId{1} \rightarrow \negHypId{0}$\newline
            $\negHypId{2} \wedge \negHypId{3}
             \rightarrow \negAsrId{1}$
          & $\labCallsAsr{1}{test\_c(c,X)}{nneg(c) \vee neg(c)}$\newline
            $\hypSuccessAsr{2}{c(Z)}{true}{nnegint(Z)}$\newline
            $\hypSuccessAsr{3}{c(Z)}{true}{negint(Z)}$
      \\ \hline
      {\tt P(a)} 
          & $Z = a$         
          & $\negAsrId{2} \rightarrow \negHypId{2}$\newline
            $\negAsrId{3} \rightarrow \negHypId{3}$
          & $\labSuccessAsr{2}{c(a)}{true}{nnegint(a)}$\newline
            $\labSuccessAsr{3}{c(a)}{true}{negint(a)}$ 
      \\ \hline
      \checkLitLab{$\asrId{2}$},\newline
      \checkLitLab{$\asrId{3}$}        
          & -       
          & $\negAsrId{2}$
          & -
      \\ \hline
      \checkLitLab{$\asrId{3}$} 
          & -          
          & $\negAsrId{3}$
          & -
      \\ \hline 
      \emptyGoal
         & -
         & -
         & -
      \\ \hline
    \end{tabular}
  \end{minipage}
  \label{tbl:test2}  
\end{table}

The derivation $D^2$ is similar to $D^1$. The difference is in
$\nthstate{D^2}{4}$ state, when it becomes possible to infer $\ADeps
\vdash \negAsrId{1}$ and thus to conclude that $\kbd{c/1} \not\in
|nneg(X)| \wedge \kbd{c/1} \not\in |neg(X)|$ and that both assertions
for \kbd{test\_c/2} are $false$ for this query.

% ----    ----    ----    ----    ----    ----    ----    ----    ----
\begin{table}[h]
  \caption{A finished derivation of the query 
    (\kbd{(test\_s(1,P),P(-2))}, $true$) to the program in 
    Fig.~\ref{ex:pp-calls}.}
  \begin{minipage}{\textwidth}
    \small
    \begin{tabular}{p{0.17\textwidth}|p{0.13\textwidth}%
                    p{0.15\textwidth}p{0.52\textwidth}}
      \hline
      G         
         & $\Delta\theta$  
         & $\Delta\ADeps$ 
          & $\Delta$(labeled instances + hypothetic $\AC$)
      \\ \hline
      {\tt test\_s(1,P),}\newline
      \kbd{P(-2)}
          & $N = 1$        
          & $\negAsrId{0} \rightarrow \negHypId{0}$ \newline
            $\negAsrId{1} \rightarrow \negHypId{0}$
          & $\labCallsAsr{0}{test\_s(1,P)}{nnegint(1) \vee negint(1)}$
            \newline
            $\labSuccessAsr{1}{test\_s(1,P)}{nnegint(1)}{nneg(P)}$     
      \\ \hline
      {\tt P = z, check(\asrId{1}),}\newline
      \texttt{P(-2)}
          & $P = z$
          & -
          & -
      \\ \hline
      {\tt check(\asrId{1}), P(-2)}
          & -
          & $\negHypId{2} \rightarrow \negAsrId{1}$
          & $\hypSuccessAsr{2}{z(Z)}{true}{nnegint(Z)}$
      \\ \hline
      \texttt{P(-2)}
          & $Z = -2$
          & $\negAsrId{2} \rightarrow \negHypId{2}$
          & $\labSuccessAsr{2}{z(-2)}{true}{nnegint(-2)}$
      \\ \hline
      \checkLitLab{$\asrId{2}$}
        & -
        & $\negAsrId{2}$ 
        & -
      \\ \hline
      \emptyGoal
          & -        
          & -
          & -
      \\ \hline
    \end{tabular}
  \end{minipage}
  \label{tbl:test5}
\end{table}

In $\nthstate{D^3}{1}$ we encounter two assertions with a predprop in
each postcondition.
According to state reduction rules, $\Delta\AC$ for this state
consists of calls and success assertion condition instances,
$\asrId{1}$ and $\asrId{2}$, $\Delta\ADeps = \{ \negAsrId{0}
\rightarrow \negHypId{0}, \negAsrId{1} \rightarrow \negHypId{0} \}$
for them.
Also, a check literal $\checkLitLab{\asrId{1}}$ is added to the goal
sequence.
After its reduction a hypothetical assertion condition $\hypId{2}$,
derived from \kbd{nneg(X)} predprop, is added to $\AC$ in
$\nthstate{D^3}{3}$, and $\ADeps$ is extended with a dependency rule
$\{\negHypId{2} \rightarrow \negAsrId{1}\}$.
In state $\nthstate{D^3}{4}$ an assertion condition instance
$\asrId{2}$ is obtained from $\hypId{2}$ and $\Delta\ADeps = \{
\negAsrId{2} \rightarrow \negHypId{2}\}$.
Finally, in state $\nthstate{D^3}{5}$ it becomes possible to infer
$\ADeps \vdash \negAsrId{1}$ and thus detect that the corresponding
assertion for \kbd{test\_s/2} predicate is $false$ because of the
predprop \kbd{nneg(X)} violation.

%%%%%%%%%%%%%%%%%%%%%%%%%%%%%%%%%%%%%%%%%%%%%%%%%%%%%%%%%%%%%%%%%%%%%%

\section{Conclusions and Future Work}

This paper contributes towards filling the gap between higher-order
(C)LP programs and assertion-based extensions for error detection and 
program validation.
To this end we have defined a new class of properties, ``predicate
properties'' (\emph{predprops} in short), and proposed a syntax and
semantics for them. These new properties can be used in assertions for
higher-order predicates to describe the properties of the higher-order
arguments.
We have also discussed several operational semantics for performing
run-time checking of programs including predprops and provided
correctness results. 

Our predprop properties specify conditions for predicates that are
independent of the usage context. This corresponds in functional
programming to the notion of \emph{tight} contract
satisfaction~\cite{DBLP:journals/toplas/DimoulasF11}, and it contrasts
with alternative approaches such as \emph{loose} contract
satisfaction~\cite{DBLP:conf/icfp/FindlerF02}.
In the latter, contracts are attached to higher-order arguments by
implicit function wrappers. The scope of checking is local to the
function evaluation. Although this is a reasonable and pragmatic
solution, we believe that our approach is more general and more
amenable for combination with static verification techniques.
For example, avoiding wrappers allows us to remove checks (e.g., by
static analysis) without altering the program semantics.
\footnote{E.g. \texttt{f(g)=g} is not an identity function if wrappers
  are added to \texttt{g} on call. This complicates reasoning about the
  program, and may lead to unexpected and hard to detect differences in
  program semantics. Similar examples can be constructed where the
  presence of predprops in assertions would invalidate many reasonable
  program transformations.}
Moreover, our approach can easily support \emph{loose} contract
satisfaction, since it is straightforward in our framework to
optionally include wrappers as special predprops.

We have included the proposed predprop extensions in an experimental
branch of the Ciao assertion language implementation. This has the
immediate advantage, in addition to the enhanced checking, that it
allows us to document higher-order programs in much more accurate way.
We have also implemented several prototypes for operational semantics
with dynamic predprop checking (see the
appendix~\ref{apx:implementation} for a minimalistic implementation),
which we plan to integrate into the already existing assertion
checking mechanisms for first-order assertions.

%%%%%%%%%%%%%%%%%%%%%%%%%%%%%%%%%%%%%%%%%%%%%%%%%%%%%%%%%%%%%%%%%%%%%%
\bibliographystyle{unsrt}
% \bibliographystyle{plain}
%\bibliographystyle{acmtrans}

%\bibliography{../../../bibtex/clip/clip,../../../bibtex/clip/general}

\appendix

%-%-%-%-%-%-%-%-%-%-%-%-%-%-%-%-%-%-%-%-%-%-%-%-%-%-%-%-%-%-%-%-%-%-%%
\clearpage
\section{Minimalistic Sample Implementation}
\label{apx:implementation}

The following code (portable to most Prolog systems with minor
changes) shows a minimalistic sample implementation (as an interpreter
\texttt{intr/1}) of the operational semantics for dynamic predprop
checking (Def.~\ref{def:ho-ho-reductions}). Conciseness and simplicity
has been favoured over efficiency. We assume that clauses, assertion
conditions, and predprops have been parsed and stored in \texttt{cl/2},
\texttt{ac/1}, \texttt{pp/2} facts, respectively. The interpreter will  
throw an exception the first time that a failed program assertion is
detected (see \texttt{ext/2} predicate). E.g.,
\texttt{intr((test\_s(1,P),P(1)))} is a valid query while
\texttt{intr((test\_s(1,P),P(-2)))} throws a failed assertion
exception. Predicate \texttt{reset/0} must be called between
\texttt{intr/1} queries to reset error status and temporary
data. In the handler errors can be gathered (as in the semantics) or
execution aborted.

{
\small  
\begin{verbatim}
:- module(_, [reset/0, intr/1], [hiord, dcg, dynamic_clauses]).
:- use_module(library(aggregates)).

% ---------------------------------------------------------------------------
% Sample program data and properties

% negint/1 and nnegint/1 properties
eval_prop(negint(X)) :- integer(X), X < 0.
eval_prop(nnegint(X)) :- integer(X), X >= 0.

% predprops nneg/1 and neg/1
pp(nneg(P), ac(P(X), nneg_c1(P)#success(true, nnegint(X)))).
pp(neg(P), ac(P(X), neg_c1(P)#success(true, negint(X)))).

% assertion conditions and clauses for test_s/2
ac(test_s(N,_P), c1#calls((nnegint(N);negint(N)))).
ac(test_s(N,P), c2#success(nnegint(N), nneg(P))).
ac(test_s(N,P), c3#success(negint(N), neg(P))).
cl(test_s( 1,P), P = z).
cl(test_s(-1,P), P = n).

% clauses for z/1, n/1
cl(z(1), true).  cl(z(-2), true).  cl(n(-1), true).  cl(n(-2), true).

% ---------------------------------------------------------------------------
% Intepreter

:- dynamic hyp_ac/2. % hypothetical assertion condition
:- dynamic negac/1. % (negated) assertion dependency rule

% Reset errors and hypothetical assertion conditions
reset :- retractall(hyp_ac(_, _)), ( retract((negac(_) :- _)), fail ; true ).

% Interpreter with higher-order assertion checking
intr(X) :- ctog(X, X1), !, intr(X1).
intr(X) :- is_blt(X), !, X.
intr((A,B)) :- !, intr(A), intr(B).
intr((A ; B)) :- !, ( intr(A) ; intr(B) ).
intr(A) :-
    get_acs(A, Acs),
    pre(Acs, Ids, []), cl(A, Body), intr(Body), post(Ids, Acs).

% Built-ins
is_blt(true).   is_blt(fail).   is_blt(_ = _).

% From call(N,...) to N(...), where N is a predicate symbol
ctog(X, _) :- var(X), !, throw(inst_error).
ctog(X, X1) :-
    X =.. [call,N|Args],
    ( atom(N) -> true ; throw(inst_error) ),
    X1 =.. [N|Args].

% Get assertion conditions for the given literal A
get_acs(A, Acs) :- ( bagof(Ac, get_ac(A, Ac), Acs) -> true ; Acs = [] ).
get_ac(A, Ac) :- ( ac(A, Ac) ; hyp_ac(A, Ac) ).

pre([]) --> [].   pre([Ac|Acs]) --> pre_(Ac), pre(Acs).
pre_(Id#calls(Pre)) --> { ext(Pre, Id) }.
pre_(Id#success(Pre, _)) --> ( { simp0(Pre, true) } -> [Id] ; [] ).

post([], _Acs).   post([Id|Ids], Acs) :- post_(Id, Acs), post(Ids, Acs).
post_(Id, Acs) :- member(Id0#success(_Pre,Post), Acs), Id == Id0, !, ext(Post, Id).
post_(_, _).

% Check/extend assertion conditions
ext(Props, Id) :-
    simp(Props, Props2), ext_(Props2, Id),
    ( negac(A), atom(A) -> throw(failed_assertion(A)) ; true ).
ext_(true, _Id) :- !.
ext_(false, Id) :- !, assertz((negac(Id) :- true)).
ext_(Props, Id) :- acsubs(Props, Props2), assertz((negac(Id) :- Props2)).

% Add assertion dependency rules
acsubs((A,B), (A2,B2)) :- !, acsubs(A, A2), acsubs(B, B2).
acsubs((A ; B), (A2 ; B2)) :- !, acsubs(A, A2), acsubs(B, B2).
acsubs(ac(L, Id#Ac), negac(Id)) :- ctog(L, L2), assertz(hyp_ac(L2, Id#Ac)).

% Condition simplification
simp(true, R) :- !, R = true.
simp((X;Y), R) :- !, simp(X, Rx), simp(Y, Ry), or(Rx, Ry, R).
simp((X,Y), R) :- !, simp(X, Rx), simp(Y, Ry), and(Rx, Ry, R).
simp(X, R) :- pp(X, Ac), !, R = Ac.
simp(X, R) :- eval_prop(X), !, R = true.
simp(_, R) :- R = false.

% Condition simplification for success preconditions
simp0(true, R) :- !, R = true.
simp0((X,Y), R) :- !, simp0(X, Rx), simp0(Y, Ry), and(Rx, Ry, R).
simp0(X, R) :- eval_prop(X), !, R = true.
simp0(_, R) :- R = false.

or(true, _, R) :- !, R = true.     or(false, X, R) :- !, R = X.
or(_, true, R) :- !, R = true.     or(X, false, R) :- !, R = X.
or(X, Y, (X;Y)).

and(false, _, R) :- !, R = false.  and(true, X, R) :- !, R = X.
and(_, false, R) :- !, R = false.  and(X, true, R) :- !, R = X.
and(X, Y, (X,Y)).
\end{verbatim}
}

\end{document}